\newtheorem{theorem}{Theorem}[section]
\newtheorem{proposition}[theorem]{Proposition}
\newtheorem{corollary}[theorem]{Corollary}
\newtheorem{remark}{Remark}[section]
\newtheorem{example}{Example}[section]
\newenvironment{proof}[1][\emph{Proof.}]{\noindent {#1} }{\hfill \rule{0.5em}{0.5em}\\}
\numberwithin{equation}{section}
\title{Homological Equations for Tensor Fields and Periodic  Averaging}
\author{Misael Avenda\~{n}o Camacho \quad and \quad Yury M. Vorobiev}
\date{\small Department of Mathematics, University of Sonora\\
Rosales y Blvd. Luis Encinas, Hermosillo, M\'exico, 83000.}
\begin{document}

\maketitle

\begin{abstract}
Homological equations of tensor type associated to periodic flows on
a manifold are studied. The Cushman intrinsic formula \cite{Cush-84}
is generalized to the case of multivector fields and differential
forms. Some applications to normal forms and the averaging method
for perturbed Hamiltonian systems on slow-fast phase spaces are
given.
\end{abstract}

\section{Introduction}
The so-called homological equations usually appear in the context of
normal forms and the method of averaging for perturbed dynamical
systems (see, for example, \cite{ArKN-88,SanVer-85}). According to
the Lie transform method \cite{Dep-69,Hen-70}, the infinitesimal
generators of normalization transformations for perturbed dynamics
systems are defined as the solutions to homological equations for
vector fields. In the Hamiltonian case, the normalization problem is
reduced to the solvability of homological equations for functions.
Let $X$ be a vector field on a manifold $M$ whose flow is periodic
with period function $T:M\rightarrow\mathbb{R}.$ Then, it is
well-known \cite{Cush-84} that for a given $G\in C^{\infty}(M)$
there exist smooth functions $F$ and $\bar{G}$ on $M$ satisfying the
homological equation
\begin{equation}
\mathcal{L}_{X}F=G-\bar{G} \label{Int1}
\end{equation}
and the condition
\begin{equation}
\mathcal{L}_{X}\bar{G}=0. \label{Int2}
\end{equation}
The solvability of this problem follows from the decomposition
\begin{equation}
C^{\infty}(M)=\operatorname{Ker}\mathcal{L}_{X}\oplus\operatorname{Im}
\mathcal{L}_{X} \label{Int3}
\end{equation}
and the corresponding global solutions are given by the formulas
\cite{Cush-84}
\begin{equation}
\bar{G}=\frac{1}{T}\int_{0}^{T}G\circ\operatorname{Fl}_{X}^{t}dt, \label{Int4}
\end{equation}
\begin{equation}
F=\frac{1}{T}\int_{0}^{T}(t-\frac{T}{2})G\circ\operatorname{Fl}_{X}^{t}dt+K,
\label{Int5}
\end{equation}
where $K$ is a first integral of $X$. This result together with
Deprit's algorithm \cite{Dep-69} implies that a perturbed
Hamiltonian system admits a global normalization of arbitrary
order relative to the unperturbed Hamiltonian vector field with
periodic flow.

In this paper, we are interested in a generalized version of problem
(\ref{Int1}), (\ref{Int2}), when $F,G$ and $\bar{G}$ are tensor
fields on $M$ of arbitrary type. In the general case, when
$T\neq\operatorname{const}$, the property like (\ref{Int3}) is no
longer true and as a consequence there are obstructions to the
solvability of (\ref{Int1}), (\ref{Int2}). Using some algebraic
properties of the $\mathbb{S}^{1}$averaging and calculus on exterior
algebras, we generalize the free-coordinate formulas (\ref{Int4}),
(\ref{Int5}) to the case of multivector fields and differential
forms on $M$. These results are applied to the normalization problem
for a special class of perturbed Hamiltonian dynamics on slow-fast
phases spaces \cite{DaVo-08,Da-Vo-09}, which leads to the study of
homological equations for non-Hamiltonian vector fields. Finally, we
show how the homological equations for 1-forms appear in the context
of Hamiltonization problem \cite{DaVo-08,Vor-05} and the
construction of symplectic structures which are invariant with
respect to skew-product $\mathbb{S}^1$-actions \cite{Vor-11}.

\section{Algebraic properties of the $\mathbb{S}^{1}$-averaging}

Let $M$ be a smooth manifold. Denote by $\mathcal{T}_{s}^{k}(M)$ the space of
all tensor fields on $M$ of type $(k,m)$. In particular, $\mathcal{T}_{0}%
^{0}(M)=C^{\infty}(M)$ and $\mathcal{T}_{0}^{1}(M)=$
$\mathfrak{X}(M)$ are the spaces of smooth functions and vector
fields on $M$, respectively. For every vector field
$X\in\mathfrak{X}(M)$, we
denote by $\mathcal{L}_{X}:\mathcal{T}_{s}^{k}(M)\rightarrow\mathcal{T}%
_{s}^{k}(M)$ the Lie derivative along $X$ , that is, the unique
differential operator on the tensor algebra of the manifold $M$
which coincides with the standard Lie derivative $\mathcal{L}_{X}$
on $C^{\infty}(M)$ and $\mathfrak{X}(M)$ (see, for example
\cite{AbMa-88}). The flow $\operatorname{Fl}_{X}^{t}$ of $X$ and the
Lie derivative $\mathcal{L}_{X}$ are related  by the formula
\[
\frac{d(\operatorname{Fl}_{X}^{t})^{\ast}\Xi}{dt}=(\operatorname{Fl}_{X}%
^{t})^{\ast}(\mathcal{L}_{Y}\Xi),
\]
for any $\Xi\in\mathcal{T}_{s}^{k}(M)$.

Now, suppose that we are given an action of the circle $\mathbb{S}%
^{1}=\mathbb{R}/2\pi\mathbb{Z}$ on $M$ with infinitesimal generator
$\Upsilon$. Therefore, $\Upsilon$ is a complete vector field on $M$
whose flow $\operatorname{Fl}_{\Upsilon}^{t}$ is $2\pi$-periodic. We
admit that the $\mathbb{S}^{1}$-action is not necessarily free.

For every tensor field $\Xi\in\mathcal{T}_{s}^{k}(M)$, its average
with respect to the $\mathbb{S}^{1}$-action is a tensor field
$\langle\Xi\rangle$ $\in\mathcal{T}_{s}^{k}(M)$ of the same type
which is defined as \cite{MaMoRa-90}
\begin{equation}
\langle\Xi\rangle:=\frac{1}{2\pi}\int_{0}^{2\pi}(\operatorname{Fl}_{\Upsilon}^{t})^{\ast
}\Xi dt. \label{Av}%
\end{equation}

A tensor field $\Xi\in\mathcal{T}_{s}^{k}(M)$ is said to be
invariant with respect to the $\mathbb{S}^{1}$-action if
$(\operatorname{Fl}_{\Upsilon}^{t})^{\ast}\Xi=\Xi$ $(\forall
t\in\mathbb{R})$ or, equivalently, $\mathcal{L}_{\Upsilon}\Xi=0$. In
terms of the $\mathbb{S}^{1}$-average of $\Xi$, the
$\mathbb{S}^{1}$-invariance condition reads $\Xi=
\langle\Xi\rangle.$ Denote by
$\mathcal{A}:\mathcal{T}_{s}^{k}(M)\rightarrow\mathcal{T}_{s}^{k}(M)$
the averaging operator, $\mathcal{A}(\Xi)=\langle\Xi\rangle$ which
is a $\mathbb{R}$-linear operator with property
$\mathcal{A}^{2}=\mathcal{A}$. It is clear that the image of
$\mathcal{A}$ consists of all $\mathbb{S}^{1}$-invariant tensor
fields. A tensor field belongs to $\operatorname{Ker}\mathcal{A}$ if
its $\mathbb{S}^{1}$-average is zero. Therefore, we have the
$\mathbb{S}^{1}$-invariant splitting
\begin{equation}
\mathcal{T}_{s}^{k}(M)=\operatorname{Im}\mathcal{A\oplus}\operatorname{Ker}
\mathcal{A}.\label{Split}%
\end{equation}
Introduce also the $\mathbb{R}$-linear operator $\mathcal{S}:\mathcal{T}_{s}^{k}(M)\rightarrow\mathcal{T}_{s}^{k}(M)$ given by
\begin{equation}
\mathcal{S}(\Xi):=\frac{1}{2\pi}\int_{0}^{2\pi}(t-\pi)(\operatorname{Fl}
_{\Upsilon}^{t})^{\ast}\Xi dt.\label{S}
\end{equation}
It follows directly from definitions that the operators $\mathcal{L}_{\Upsilon},\mathcal{A}$ and $\mathcal{S}$ pairwise commute and satisfy the relations
\begin{equation}
\mathcal{A}\circ\mathcal{L}_{\Upsilon}=\mathcal{L}_{\Upsilon}\circ
\mathcal{A}=0,\label{P3}
\end{equation}
\begin{equation}
\mathcal{A}\circ\mathcal{S}=\mathcal{S\circ A}=0.\label{P2}
\end{equation}
Moreover, we have the following important property.

\begin{proposition}
The following identity holds
\begin{equation}
\mathcal{L}_{\Upsilon}\circ\mathcal{S}=\operatorname*{id}-\mathcal{A},
\label{P1}
\end{equation}
\end{proposition}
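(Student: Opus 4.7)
The plan is to compute $\mathcal{L}_{\Upsilon}\mathcal{S}(\Xi)$ directly from the integral definition \eqref{S} and reduce the proof to an integration by parts. The key identity I will exploit is the standard relation between the Lie derivative and the pullback flow,
\[
\frac{d}{dt}(\mathrm{Fl}_{\Upsilon}^{t})^{\ast}\Xi = (\mathrm{Fl}_{\Upsilon}^{t})^{\ast}(\mathcal{L}_{\Upsilon}\Xi) = \mathcal{L}_{\Upsilon}(\mathrm{Fl}_{\Upsilon}^{t})^{\ast}\Xi,
\]
quoted in the excerpt. Since $\mathcal{L}_{\Upsilon}$ is $\mathbb{R}$-linear and commutes with the integral over the parameter $t$, I can move it inside \eqref{S} and then rewrite the integrand as the $t$-derivative of $(\mathrm{Fl}_{\Upsilon}^{t})^{\ast}\Xi$ multiplied by the weight $(t-\pi)$.

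Second, I will integrate by parts in
\[
\mathcal{L}_{\Upsilon}\mathcal{S}(\Xi) = \frac{1}{2\pi}\int_{0}^{2\pi}(t-\pi)\,\frac{d}{dt}(\mathrm{Fl}_{\Upsilon}^{t})^{\ast}\Xi\,dt,
\]
treating $u=t-\pi$ and $dv = \tfrac{d}{dt}(\mathrm{Fl}_{\Upsilon}^{t})^{\ast}\Xi\,dt$. The boundary contribution is $\bigl[(t-\pi)(\mathrm{Fl}_{\Upsilon}^{t})^{\ast}\Xi\bigr]_{0}^{2\pi}$, and here I will use the hypothesis that $\mathrm{Fl}_{\Upsilon}^{t}$ is $2\pi$-periodic, so that $(\mathrm{Fl}_{\Upsilon}^{2\pi})^{\ast}\Xi=(\mathrm{Fl}_{\Upsilon}^{0})^{\ast}\Xi=\Xi$. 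This collapses the boundary term to $\pi\,\Xi-(-\pi)\,\Xi=2\pi\,\Xi$, while the remaining integral is $-\int_{0}^{2\pi}(\mathrm{Fl}_{\Upsilon}^{t})^{\ast}\Xi\,dt = -2\pi\langle\Xi\rangle$, by the definition \eqref{Av} of $\mathcal{A}$. Dividing by $2\pi$ yields exactly $\Xi-\mathcal{A}(\Xi)$.

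There is no real obstacle: the only subtle point is justifying that differentiation can be brought inside the parameter integral, but since $(t-\pi)(\mathrm{Fl}_{\Upsilon}^{t})^{\ast}\Xi$ depends smoothly on $t$ and the domain $[0,2\pi]$ is compact, this is a standard application of differentiation under the integral sign in each tensor component. Everything else is the periodicity of the flow, which kills the asymmetry introduced by the weight $(t-\pi)$ and produces the clean splitting identity \eqref{P1}.
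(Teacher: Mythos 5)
Your proof is correct and is essentially the paper's argument in a different guise: the paper computes $(\operatorname{Fl}_{\Upsilon}^{\tau})^{\ast}\mathcal{S}(\Xi)$ by a change of variable and differentiates in $\tau$ (picking up the boundary terms from the variable limits plus the bulk term $-\langle\Xi\rangle$), which is precisely your integration by parts carried out at $\tau=0$. Both versions hinge on the same two facts — the identity $\frac{d}{dt}(\operatorname{Fl}_{\Upsilon}^{t})^{\ast}\Xi=(\operatorname{Fl}_{\Upsilon}^{t})^{\ast}\mathcal{L}_{\Upsilon}\Xi$ and the $2\pi$-periodicity of the flow — so no further comment is needed.
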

\begin{proof}
For every tensor field $\Xi\in\mathcal{T}_{s}^{k}(M)$, by definition
(\ref{S}), we have
\begin{align*}
(\operatorname{Fl}_{\Upsilon}^{\tau})^{\ast}\mathcal{S}(\Xi) &  =\frac{1}
{2\pi}\int_{0}^{2\pi}(t-\pi)(\operatorname{Fl}_{\Upsilon}^{t+\tau})^{\ast}\Xi
dt\\
&  =\frac{1}{2\pi}\int_{\tau}^{2\pi+\tau}(t-\tau-\pi)(\operatorname{Fl}
_{\Upsilon}^{t})^{\ast}\Xi dt
\end{align*}
Differentiating the both sides of this equality in $\tau$ and using the $2\pi
$-periodicity of the flow $\operatorname{Fl}_{\Upsilon}^{\tau}$, we get
\begin{align*}
\frac{d}{d\tau}(\operatorname{Fl}_{\Upsilon}^{\tau})^{\ast}\mathcal{S}(\Xi) &
=\frac{1}{2\pi}[(t-\tau-\pi)(\operatorname{Fl}_{\Upsilon}^{t})^{\ast}\Xi
]\mid_{\tau}^{2\pi+\tau}-\frac{1}{2\pi}\int_{\tau}^{2\pi+\tau}
(\operatorname{Fl}_{\Upsilon}^{t})^{\ast}\Xi dt\\
&
=(\operatorname{Fl}_{\Upsilon}^{\tau})^{\ast}(\Xi-\langle\Xi\rangle)
\end{align*}
Comparing this equality with the identity
\begin{equation}
\frac{d}{d\tau}(\operatorname{Fl}_{\Upsilon}^{\tau})^{\ast}\mathcal{S}
(\Xi)=(\operatorname{Fl}_{\Upsilon}^{\tau})^{\ast}(\mathcal{L}_{\Upsilon
}\mathcal{S}(\Xi))\label{Fl1}
\end{equation}
gives
$\mathcal{L}_{\Upsilon}(\mathcal{S}(\Xi))=\Xi-\langle\Xi\rangle$.
\end{proof}

\begin{corollary}
For every tensor field $\Xi$ $\in\mathcal{T}_{s}^{k}(M)$, the following
assertions are equivalent

\begin{itemize}
\item $\mathcal{S}(\Xi)=0$;

\item $\mathcal{S}(\Xi)$ is $\mathbb{S}^{1}$-invariant;

\item $\Xi$ is $\mathbb{S}^{1}$-invariant.
\end{itemize}
\end{corollary}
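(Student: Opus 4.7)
The plan is to establish the equivalence as a cyclic chain of implications $(1)\Rightarrow(2)\Rightarrow(3)\Rightarrow(1)$, leaning almost entirely on the two algebraic identities already available: $\mathcal{L}_{\Upsilon}\circ\mathcal{S}=\operatorname{id}-\mathcal{A}$ from (\ref{P1}) and $\mathcal{S}\circ\mathcal{A}=0$ from (\ref{P2}), together with the characterization that $\mathbb{S}^1$-invariance of a tensor field $\Xi$ is the same as $\mathcal{L}_\Upsilon\Xi=0$, or equivalently $\Xi=\mathcal{A}(\Xi)$.

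First I would handle $(1)\Rightarrow(2)$, which is immediate since the zero tensor field is trivially $\mathbb{S}^1$-invariant. Next, for $(2)\Rightarrow(3)$, I would observe that if $\mathcal{S}(\Xi)$ is $\mathbb{S}^1$-invariant then $\mathcal{L}_\Upsilon\mathcal{S}(\Xi)=0$; substituting into the key identity (\ref{P1}) gives $\Xi-\mathcal{A}(\Xi)=0$, i.e.\ $\Xi=\langle\Xi\rangle$, which is precisely the $\mathbb{S}^1$-invariance of $\Xi$.

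Finally, for $(3)\Rightarrow(1)$, I would use the invariance of $\Xi$ in the form $\Xi=\mathcal{A}(\Xi)$ and apply $\mathcal{S}$ to both sides; the relation (\ref{P2}) then yields $\mathcal{S}(\Xi)=\mathcal{S}(\mathcal{A}(\Xi))=0$. This closes the loop.

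There is no real obstacle here; the only point requiring any care is making sure the argument $(2)\Rightarrow(3)$ uses the right direction of (\ref{P1}), namely that vanishing of $\mathcal{L}_\Upsilon\mathcal{S}(\Xi)$ forces $\Xi$ itself (not $\mathcal{S}(\Xi)$) to coincide with its average. Everything else reduces to a direct substitution into the two identities already proved, so the argument should be short and essentially formal.
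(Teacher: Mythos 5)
Your proof is correct and rests on exactly the same two identities the paper uses, namely (\ref{P1}) and (\ref{P2}); the only cosmetic difference is that you arrange the argument as a cycle $(1)\Rightarrow(2)\Rightarrow(3)\Rightarrow(1)$ and invoke $\mathcal{S}\circ\mathcal{A}=0$ for the last step, whereas the paper proves the two pairwise equivalences directly using $\mathcal{A}\circ\mathcal{S}=0$ and (\ref{P1}). Both readings are covered by the displayed relation (\ref{P2}), so this is essentially the paper's own argument.
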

\begin{proof}
The equivalence of the first two conditions follows from property
(\ref{P2}) which says that $\langle\mathcal{S}(\Xi)\rangle=0$.
Property (\ref{P1}) implies the equivalence of the last two
assertions.
\end{proof}

\begin{proposition}
The following relations hold
\begin{equation}
\operatorname{Ker}\mathcal{S}=\operatorname{Ker}\mathcal{L}_{\Upsilon
}=\operatorname{Im}\mathcal{A}, \label{P5}
\end{equation}
\begin{equation}
\operatorname{Im}\mathcal{S}=\operatorname{Im}\mathcal{L}_{\Upsilon
}=\operatorname{Ker}\mathcal{A}. \label{P6}
\end{equation}
\end{proposition}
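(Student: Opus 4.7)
The plan is to prove the equalities as a cycle of inclusions, leaning heavily on the three identities already established: $\mathcal{A}\circ\mathcal{L}_\Upsilon = \mathcal{L}_\Upsilon\circ\mathcal{A}=0$, $\mathcal{A}\circ\mathcal{S}=\mathcal{S}\circ\mathcal{A}=0$, and the key identity $\mathcal{L}_\Upsilon\circ\mathcal{S}=\operatorname{id}-\mathcal{A}$ from Proposition on (\ref{P1}). I would prove (\ref{P5}) and (\ref{P6}) in parallel since essentially the same three-line argument shows each chain.

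First I would dispose of the easy inclusions that come directly from the vanishing relations. The identity $\mathcal{A}\circ\mathcal{L}_\Upsilon=0$ gives $\operatorname{Im}\mathcal{L}_\Upsilon\subset\operatorname{Ker}\mathcal{A}$, and $\mathcal{A}\circ\mathcal{S}=0$ gives $\operatorname{Im}\mathcal{S}\subset\operatorname{Ker}\mathcal{A}$. The identity $\mathcal{L}_\Upsilon\circ\mathcal{A}=0$ gives $\operatorname{Im}\mathcal{A}\subset\operatorname{Ker}\mathcal{L}_\Upsilon$; the reverse follows at once from the $\mathbb{S}^1$-invariance characterization $\Xi=\langle\Xi\rangle$ already discussed in the text, which identifies $\operatorname{Ker}\mathcal{L}_\Upsilon$ with the image of the projector $\mathcal{A}$. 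Similarly, $\mathcal{S}\circ\mathcal{A}=0$ yields $\operatorname{Im}\mathcal{A}\subset\operatorname{Ker}\mathcal{S}$ (this is also a restatement of the Corollary).

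The substantive reverse inclusions all rest on one trick: rewriting a tensor as the image of something via (\ref{P1}). If $\mathcal{A}(\Xi)=0$, then (\ref{P1}) gives
\[
\Xi=\Xi-\mathcal{A}(\Xi)=\mathcal{L}_\Upsilon\bigl(\mathcal{S}(\Xi)\bigr)=\mathcal{S}\bigl(\mathcal{L}_\Upsilon\Xi\bigr),
\]
where the last equality uses that $\mathcal{L}_\Upsilon$ and $\mathcal{S}$ commute. The first form exhibits $\Xi$ as an element of $\operatorname{Im}\mathcal{L}_\Upsilon$, so $\operatorname{Ker}\mathcal{A}\subset\operatorname{Im}\mathcal{L}_\Upsilon$; the second exhibits it as an element of $\operatorname{Im}\mathcal{S}$, so $\operatorname{Ker}\mathcal{A}\subset\operatorname{Im}\mathcal{S}$. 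For $\operatorname{Ker}\mathcal{S}\subset\operatorname{Im}\mathcal{A}$, applying $\mathcal{L}_\Upsilon$ to $\mathcal{S}(\Xi)=0$ and using (\ref{P1}) gives $\Xi=\mathcal{A}(\Xi)\in\operatorname{Im}\mathcal{A}$. Assembling the resulting chains
\[
\operatorname{Im}\mathcal{A}\subset\operatorname{Ker}\mathcal{S},\quad \operatorname{Ker}\mathcal{S}\subset\operatorname{Im}\mathcal{A},\quad \operatorname{Im}\mathcal{A}=\operatorname{Ker}\mathcal{L}_\Upsilon
\]
proves (\ref{P5}), and
\[
\operatorname{Im}\mathcal{S}\subset\operatorname{Ker}\mathcal{A},\quad \operatorname{Im}\mathcal{L}_\Upsilon\subset\operatorname{Ker}\mathcal{A},\quad \operatorname{Ker}\mathcal{A}\subset\operatorname{Im}\mathcal{S}\cap\operatorname{Im}\mathcal{L}_\Upsilon
\]
proves (\ref{P6}).

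There is no real obstacle here; the only point requiring any care is to notice that the single identity (\ref{P1}), read in two different ways using the commutation $\mathcal{L}_\Upsilon\circ\mathcal{S}=\mathcal{S}\circ\mathcal{L}_\Upsilon$, simultaneously realizes every $\mathbb{S}^1$-averaged-to-zero tensor as a Lie derivative and as an $\mathcal{S}$-image, which is what collapses the three subspaces in (\ref{P6}) to a single one.
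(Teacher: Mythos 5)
Your proof is correct and rests on exactly the same ingredients as the paper's: the vanishing relations $\mathcal{A}\circ\mathcal{L}_\Upsilon=\mathcal{A}\circ\mathcal{S}=0$, the key identity $\mathcal{L}_\Upsilon\circ\mathcal{S}=\operatorname{id}-\mathcal{A}$, and the commutativity of the three operators. The only cosmetic difference is that you realize $\operatorname{Ker}\mathcal{A}\subset\operatorname{Im}\mathcal{S}$ directly by writing $\Xi=\mathcal{S}(\mathcal{L}_\Upsilon\Xi)$, whereas the paper packages the same computation as the operator identities $\mathcal{S}=\mathcal{L}_\Upsilon\circ\mathcal{S}^{2}$ and $\mathcal{L}_\Upsilon=\mathcal{S}\circ\mathcal{L}_\Upsilon^{2}$; the arguments are essentially identical.
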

\begin{proof}
Taking into account that the kernel of the Lie derivative $\mathcal{L}
_{\Upsilon}:\mathcal{T}_{s}^{k}(M)\rightarrow\mathcal{T}_{s}^{k}(M)$ consists
of all $\mathbb{S}^{1}$-invariant tensor fields and by the Corollary 2.2 ,
we derive (\ref{P5}). By (\ref{P3}), we have $\operatorname{Im}\mathcal{L}%
_{\Upsilon}\subseteq\operatorname{Ker}\mathcal{A}$. On the other hand, it
follows from (\ref{P1}) that
\begin{equation}
\mathcal{L}_{\Upsilon}\mathcal{S}(\Xi)=\Xi\text{ }\forall\Xi
\in\operatorname{Ker}\mathcal{A}\label{P7}%
\end{equation}
and hence $\operatorname{Ker}\mathcal{A\subseteq}\operatorname{Im}%
\mathcal{L}_{\Upsilon}$. Therefore, $\operatorname{Im}\mathcal{L}_{\Upsilon
}=\operatorname{Ker}\mathcal{A}$. By (\ref{P3})-(\ref{P1}) we have the
identities $\mathcal{S}=\mathcal{L}_{\Upsilon}\circ\mathcal{S}^{2}$ and
$\mathcal{L}_{\Upsilon}=\mathcal{S}\circ\mathcal{L}_{\Upsilon}^{2}$  which say
that $\operatorname{Im}\mathcal{S}=\operatorname{Im}\mathcal{L}_{\Upsilon}$.
\end{proof}

As a consequence of (\ref{Split}) and (\ref{P5}),(\ref{P6}), we get
also the decomposition
\begin{equation}
\mathcal{T}_{s}^{k}(M)=\operatorname{Ker}\mathcal{L}_{\Upsilon}\oplus
\operatorname{Im}\mathcal{L}_{\Upsilon}\label{P10}%
\end{equation}
which together with (\ref{P7}) implies that the restriction of
$\mathcal{L}_{\Upsilon}$ to $\operatorname{Im}\mathcal{L}_{\Upsilon}$ is an
isomorphism whose inverse is just $\mathcal{S}$.

\section{Homological equations associated to periodic flows}

Let $X$ be a complete vector field on a manifold $M$ with periodic flow. This
means that there exists a a smooth positive function $T:M\rightarrow
\mathbb{R}$, called a period function,  such that $\operatorname{Fl}%
_{X}^{t+T(m)}(m)=\operatorname{Fl}_{X}^{t}(m)$  for all $m\in M$ and
$t\in\mathbb{R}$. Introduce also the frequency function $\omega
:M\rightarrow\mathbb{R}$ given by $\omega=\frac{2\pi}{T}$. It is
clear that
$\omega$ is a first integral of $X$. The periodic flow $\operatorname{Fl}%
_{X}^{t}$ induces the $\mathbb{S}^{1}$-action on $M$ whose infinitesimal
generator is the vector field $\Upsilon=\frac{1}{\omega}X$ with $2\pi
$-periodic flow.

\textbf{Homological equations for $k$-vector fields.} Let $\chi^{k}(M)=\operatorname*{Sec}(\wedge^{k}TM)$
be the space of all $k$-multivector fields on $M$. In particular,
$\chi^{0}(M)=C^{\infty}(M)$ and $\chi^{1}(M)=\mathfrak{X}(M)$ . It
is clear that the operators $\mathcal{L}_{\Upsilon},\mathcal{S}$ and
$\mathcal{A}$ leave invariant the subspaces
$\chi^{k}(M)\subset\mathcal{T}_{0}^{k}(M)$. For, every $k$-vector
field $A\in\chi^{k}(M)$ and a 1-form $\alpha$ on $M$, denote by $\mathbf{i}%
_{\alpha}A\in\mathcal{T}_{0}^{k-1}(M)$ a $(k-1)$-vector field
defined by
\[
(\mathbf{i}_{\alpha}A)(\alpha_{1},...,\alpha_{k-1})=A(\alpha,\alpha
_{1},...,\alpha_{k-1})
\]
for all 1-forms $\alpha_{1},...,\alpha_{k-1}$ on $M$. By definition,
$\mathbf{i}_{\alpha}A=0$, for every $0$-vector field $A$.

Consider the $\mathbb{S}^{1}$-action on $M$ associated to the periodic flow
$\operatorname{Fl}_{X}^{t}$. Denote by $\chi_{\operatorname{inv}}%
^{k}(M)=\operatorname{Ker}\mathcal{L}_{\Upsilon}$ the subspace of
all $\mathbb{S}^{1}$-invariant $k$-vector fields on $M$. Then,
according to (\ref{P10}), we have the splitting
\begin{equation}
\chi^{k}(M)=\chi_{\operatorname{inv}}^{k}(M)\oplus\chi_{0}^{k}(M),\label{DEC1}%
\end{equation}
where $\chi_{0}^{k}(M)=\operatorname{Im}\mathcal{L}_{\Upsilon}$
denotes the subspace of all $k$-vector fields on $M$ with zero
average.

\begin{theorem}\label{kvect}
Let $X$ be a vector field on $M$ with periodic flow and frequency function
$\omega$. Then, for a given $B\in\chi^{k}(M)$, all $k$-vector fields $A$ and
$\bar{B}$ on $M$ satisfying the homological equation
\begin{equation}
\mathcal{L}_{X}A=B-\bar{B}\label{H1}
\end{equation}
and the condition
\begin{equation}
\bar{B}\text{ is }\mathbb{S}^{1}\text{-invariant}\label{H2}
\end{equation}
are of the form
\begin{equation}
\bar{B}=\langle
B\rangle+\frac{1}{\omega}X\wedge\mathbf{i}_{d\omega}C,\label{S1}
\end{equation}
\begin{equation}
A=\frac{1}{\omega}\mathcal{S}(B)+\frac{1}{\omega^{3}}X\wedge\mathcal{S}%
^{2}(\mathbf{i}_{d\omega}B)+C,\label{S2}
\end{equation}
where $C\in\chi_{\operatorname{inv}}^{k}(M)$ is an arbitrary
$\mathbb{S}^{1}$-invariant $k$-vector field. Here, the average
$\langle\rangle$ is taken with respect to the
$\mathbb{S}^{1}$-action on $M$ associated to the flow of $X$.
\end{theorem}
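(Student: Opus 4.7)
My plan is to reduce the equation $\mathcal{L}_X A=B-\bar B$ to a problem about the Lie derivative along the $\mathbb{S}^{1}$-generator $\Upsilon=X/\omega$, where Section~2 provides the inverse of $\mathcal{L}_{\Upsilon}$ on the zero-average subspace, and then to track the corrections arising from the non-constancy of the period. The starting point will be the Leibniz-type identity
\[
\mathcal{L}_{X}A \;=\; \omega\,\mathcal{L}_{\Upsilon}A \;-\; \frac{1}{\omega}\,X\wedge\mathbf{i}_{d\omega}A,
\]
which follows from $X=\omega\Upsilon$ and the general rule $\mathcal{L}_{fY}A = f\mathcal{L}_{Y}A - Y\wedge\mathbf{i}_{df}A$ for the Lie derivative of a $k$-vector field. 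Since $\omega$ is a first integral of $X$, the functions $\omega^{\pm1}$, the vector field $X$, and the 1-form $d\omega$ are all $\mathbb{S}^{1}$-invariant, so $\mathcal{A}$ and $\mathcal{S}$ commute with multiplication by $\omega^{\pm1}$, with the wedge factor $X$, and with the contraction $\mathbf{i}_{d\omega}$. Two further algebraic facts will be used throughout: $d\omega(X)=0$ and $\mathbf{i}_{d\omega}^{2}=0$.

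For existence, I would substitute the proposed $A$ from (\ref{S2}) into the Leibniz identity above. Using $\mathcal{L}_{\Upsilon}C=0$, together with (\ref{P1}) in the form $\mathcal{L}_{\Upsilon}\mathcal{S}=\operatorname{id}-\mathcal{A}$ and $\mathcal{A}\mathcal{S}=0$ from (\ref{P2}), the principal term gives $\omega\mathcal{L}_{\Upsilon}A = (B-\langle B\rangle)+\frac{1}{\omega^{2}}X\wedge\mathcal{S}(\mathbf{i}_{d\omega}B)$. Contracting (\ref{S2}) with $d\omega$ and invoking $\mathbf{i}_{d\omega}^{2}=0$ and $d\omega(X)=0$ collapses the $X$-term and yields $\mathbf{i}_{d\omega}A = \frac{1}{\omega}\mathcal{S}(\mathbf{i}_{d\omega}B)+\mathbf{i}_{d\omega}C$. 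The correction $-\frac{1}{\omega}X\wedge\mathbf{i}_{d\omega}A$ therefore cancels the stray $\frac{1}{\omega^{2}}X\wedge\mathcal{S}(\mathbf{i}_{d\omega}B)$ and contributes only $-\frac{1}{\omega}X\wedge\mathbf{i}_{d\omega}C$, leaving exactly $B-\bar B$ with $\bar B$ as in (\ref{S1}); its $\mathbb{S}^{1}$-invariance is automatic because every factor in (\ref{S1}) is invariant.

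For the parameterization, I would apply $\mathcal{A}$ to (\ref{H1}). Since $\mathcal{A}\mathcal{L}_{\Upsilon}=0$ and $\omega,X,d\omega$ are invariant, this forces $\bar B = \langle B\rangle + \frac{1}{\omega}X\wedge\mathbf{i}_{d\omega}\langle A\rangle$, identifying $C=\langle A\rangle$ as the only free parameter in $\chi_{\operatorname{inv}}^{k}(M)$. Writing $A = C + A_{0}$ with $A_{0}\in\operatorname{Im}\mathcal{L}_{\Upsilon}$ reduces (\ref{H1}) to the coupled equation
\[
\omega\,\mathcal{L}_{\Upsilon}A_{0} \;-\; \frac{1}{\omega}\,X\wedge\mathbf{i}_{d\omega}A_{0} \;=\; B-\langle B\rangle.
\]
The decisive step will be to contract with $d\omega$: using $\mathbf{i}_{d\omega}^{2}=0$ and $d\omega(X)=0$, the correction term disappears and the equation decouples to $\omega\mathcal{L}_{\Upsilon}(\mathbf{i}_{d\omega}A_{0}) = \mathbf{i}_{d\omega}(B-\langle B\rangle)$, whose right-hand side lies in $\operatorname{Ker}\mathcal{A}$. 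Proposition~2.3 then gives the unique zero-average solution $\mathbf{i}_{d\omega}A_{0}=\frac{1}{\omega}\mathcal{S}(\mathbf{i}_{d\omega}B)$. Substituting this back and inverting $\mathcal{L}_{\Upsilon}$ on $\operatorname{Im}\mathcal{L}_{\Upsilon}$ by $\mathcal{S}$ one more time produces the formula (\ref{S2}) for $A_{0}$. The main technical obstacle, and the reason (\ref{S1})--(\ref{S2}) differ from the scalar formulas (\ref{Int4})--(\ref{Int5}), is the correction term $-\frac{1}{\omega}X\wedge\mathbf{i}_{d\omega}A$ produced by the non-constancy of $\omega$; the conjunction of $d\omega(X)=0$ with $\mathbf{i}_{d\omega}^{2}=0$ is exactly what lets the coupled equation decouple after a single contraction with $d\omega$ and then be inverted by $\mathcal{S}$.
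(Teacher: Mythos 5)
Your proposal is correct and follows essentially the same route as the paper: the identity $\mathcal{L}_{\omega\Upsilon}A=\omega\mathcal{L}_{\Upsilon}A-\Upsilon\wedge\mathbf{i}_{d\omega}A$, averaging to pin down $\bar{B}=\langle B\rangle+\frac{1}{\omega}X\wedge\mathbf{i}_{d\omega}\langle A\rangle$, the splitting $A=\langle A\rangle+A_{0}$, and inversion of $\mathcal{L}_{\Upsilon}$ on $\operatorname{Ker}\mathcal{A}$ by $\mathcal{S}$. The only divergence is at the last step, where the paper posits the ansatz $\tilde{A}_{0}=\Upsilon\wedge\mathcal{S}(D)$ while you contract the reduced equation with $d\omega$ (using $d\omega(X)=0$ and $\mathbf{i}_{d\omega}^{2}=0$) to determine $\mathbf{i}_{d\omega}A_{0}$ first; your variant has the small advantage of deriving $A_{0}$ as a chain of necessary conditions, so uniqueness of $A_{0}$ for fixed $C=\langle A\rangle$ comes out explicitly.
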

\begin{proof}
Using the identity, \cite{AbMa-88}
\begin{equation}
\mathcal{L}_{\omega\Upsilon}A=\omega\mathcal{L}_{\Upsilon}A-\Upsilon
\wedge\mathbf{i}_{d\omega}A,\label{S3}
\end{equation}
we rewrite equation (\ref{H1}) in the form
\begin{equation}
\mathcal{L}_{\Upsilon}A-\frac{1}{\omega}\Upsilon\wedge\mathbf{i}_{d\omega
}A=\frac{1}{\omega}(B-\bar{B}).\label{PSV1}
\end{equation}
Applying the averaging operator to the both sides of this equation and taking
into account condition (\ref{H2}), we get
\begin{equation}
\bar{B}=\langle\bar{B}\rangle=\langle
B\rangle+\Upsilon\wedge\mathbf{i}_{d\omega}\langle
A\rangle.\label{PSV2}
\end{equation}
According to decomposition (\ref{DEC1}), we have
\begin{equation}
A=\langle A\rangle+A_{0},\ \ \ \ \langle A_{0}\rangle=0,\label{PSV3}
\end{equation}
\[
B=\langle B\rangle+B_{0},\ \ \ \ \langle B_{0}\rangle=0.
\]
Putting these representations together with (\ref{PSV2}) into (\ref{PSV1}), we
see that the original problem (\ref{H1}), (\ref{H2}) is reduced to the
following equation for $A_{0}\in\chi_{0}^{k}(M)$:
\[
\mathcal{L}_{\Upsilon}A_{0}-\frac{1}{\omega}\Upsilon\wedge\mathbf{i}_{d\omega
}A_{0}=\frac{1}{\omega}B_{0}.
\]
Looking for $A_{0}$ in the form $A_{0}=\frac{1}{\omega}\mathcal{S}%
(B_{0})+\tilde{A}_{0}$ and using property (\ref{P1}), we conclude that
$\tilde{A}_{0}\in\chi_{0}^{k}(M)$ must satisfy the equation
\begin{equation}
\mathcal{L}_{\Upsilon}\tilde{A}_{0}-\frac{1}{\omega}\Upsilon\wedge
\mathbf{i}_{d\omega}\tilde{A}_{0}=\frac{1}{\omega^{2}}\Upsilon\wedge
\mathbf{i}_{d\omega}\mathcal{S}(B_{0})\label{PSV5}
\end{equation}
Next, taking into account that $\mathbf{i}_{d\omega}\Upsilon=0$ and
putting $\tilde{A}_{0}=\Upsilon\wedge\mathcal{S}(D)$, we reduce
(\ref{PSV5}) to the following equation for $D\in\chi_{0}^{k-1}(M)$:
\[
\Upsilon\wedge\mathcal{L}_{\Upsilon}\mathcal{S}(D)=\frac{1}{\omega^{2}
}\Upsilon\wedge\mathbf{i}_{d\omega}\mathcal{S}(B_{0}).
\]
By property (\ref{P1}) the tensor field $D=\frac{1}{\omega^{2}}\mathcal{S}
^{2}(\mathbf{i}_{d\omega}B_{0})$ satisfies the relation $\mathcal{L}
_{\Upsilon}\mathcal{S}(D)=\frac{1}{\omega^{2}}\mathbf{i}_{d\omega}%
\mathcal{S}(B_{0})$ . Therefore, the solutions to problem (\ref{H1}),
(\ref{H2}) are given by (\ref{PSV2}) and (\ref{PSV3}), where
\begin{equation}
A_{0}=\frac{1}{\omega}\mathcal{S}(B_{0})+\frac{1}{\omega^{3}}X\wedge
\mathcal{S}^{2}(\mathbf{i}_{d\omega}B_{0})\label{PSV6}
\end{equation}
and $\langle A\rangle$ is an arbitrary $\mathbb{S}^{1}$-invariant
$k$-vector field on $M$. Finally, property (\ref{P2}) says that
$\mathcal{S}(B_{0})=\mathcal{S}(B)$ and hence formulas (\ref{S1})
and (\ref{S2}) follow from (\ref{PSV2}) and (\ref{PSV6}) with
$C=\langle A\rangle$.
\end{proof}

As a straightforward consequence of Theorem \ref{kvect}, we get the following result.

\begin{corollary}\label{corol}
The kernel of the Lie derivative $\mathcal{L}_{X}:$ $\chi^{k}(M)\rightarrow$
$\chi^{k}(M)$ is
\begin{equation}
\operatorname{Ker}(\mathcal{L}_{X})=\chi_{\operatorname{inv}}^{k}(M)\cap
\{C\in\chi^{k}(M)\mid X\wedge\mathbf{i}_{d\omega}C=0\}.\label{KerLX}%
\end{equation}
For a given $B\in\chi^{k}(M)$ $(k\geq1)$, the homological equation
\begin{equation}
\mathcal{L}_{X}A=B\label{Hom1}
\end{equation}
is solvable relative to a $k$-vector field $A$ on $M$ if and only if
\begin{equation}
\langle B\rangle=X\wedge\mathbf{i}_{d\omega}P\label{Hom2}
\end{equation}
for a certain $\mathbb{S}^{1}$-invariant $k$-vector field $P\in
\chi_{\operatorname{inv}}^{k}(M)$. Under this condition, every
solution of (\ref{Hom1}) is given by (\ref{S2}), where $C = P + C'$,
$C' \in \mathrm{ker}(\mathcal{L}_X)$.
\end{corollary}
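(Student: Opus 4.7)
The plan is to obtain all three assertions as direct specializations of Theorem \ref{kvect}, exploiting two simple facts throughout: the frequency function $\omega = 2\pi/T$ is a first integral of $X$ (so $\omega$, $1/\omega$ and $d\omega$ are $\mathbb{S}^{1}$-invariant, and multiplication by $\omega^{\pm 1}$ preserves $\chi_{\operatorname{inv}}^{k}(M)$), and $\mathbf{i}_{d\omega}$ is $C^{\infty}(M)$-linear on multivector fields. No new analytic machinery beyond Theorem \ref{kvect} is required.

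For the kernel formula (\ref{KerLX}), I would apply Theorem \ref{kvect} with $B = 0$. Then $\langle B\rangle = 0$, $\mathcal{S}(B) = 0$ and $\mathbf{i}_{d\omega} B = 0$, so (\ref{S2}) collapses to $A = C$ with $C \in \chi_{\operatorname{inv}}^{k}(M)$, while (\ref{S1}) gives $\bar{B} = \frac{1}{\omega}\, X\wedge \mathbf{i}_{d\omega} C$. Since the theorem states that \emph{every} pair $(A,\bar{B})$ satisfying (\ref{H1})--(\ref{H2}) with $B = 0$ has this form, the condition $\mathcal{L}_{X} A = 0$ (equivalently $\bar{B} = 0$) translates exactly into $A \in \chi_{\operatorname{inv}}^{k}(M)$ together with $X \wedge \mathbf{i}_{d\omega} A = 0$, yielding (\ref{KerLX}).

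For the solvability of (\ref{Hom1}), I would apply Theorem \ref{kvect} to the given $B$ and impose $\bar{B} = 0$. By (\ref{S1}) this is equivalent to the existence of $C \in \chi_{\operatorname{inv}}^{k}(M)$ with $\langle B\rangle = -\tfrac{1}{\omega}\, X \wedge \mathbf{i}_{d\omega} C = X \wedge \mathbf{i}_{d\omega}(-C/\omega)$. Setting $P \defeq -C/\omega$ (invariant because $\omega$ is) reproduces condition (\ref{Hom2}), giving necessity. Conversely, given $P$ as in (\ref{Hom2}), I would choose $C_{0} \defeq -\omega P$, verify by direct substitution into (\ref{S1}) that this $C_{0}$ kills $\bar{B}$, and read off a particular solution from (\ref{S2}). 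Any two solutions $A_{1}, A_{2}$ of $\mathcal{L}_{X} A = B$ differ by an element of $\operatorname{Ker}(\mathcal{L}_{X})$, so the general solution is obtained from (\ref{S2}) with $C = C_{0} + C'$, $C' \in \operatorname{Ker}(\mathcal{L}_{X})$, matching the parametrization announced in the corollary.

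There is no genuine obstacle here; the whole proof is bookkeeping around Theorem \ref{kvect}. The only mild subtlety is tracking the scalar factor $-\omega$ that relates the parameter $P$ of (\ref{Hom2}) to the parameter $C$ of (\ref{S2}), and checking that this rescaling preserves $\mathbb{S}^{1}$-invariance --- which is exactly where the fact that $\omega$ is a first integral of $X$ is used.
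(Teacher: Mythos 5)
Your proposal is correct and is exactly the argument the paper intends: the paper offers no proof beyond calling the corollary a straightforward consequence of Theorem \ref{kvect}, and specializing (\ref{S1})--(\ref{S2}) to $B=0$ (for the kernel formula) and to $\bar{B}=0$ (for the solvability criterion) is precisely that consequence. In fact your bookkeeping of the scalar relating $P$ to $C$ is more careful than the corollary's own wording: setting $\bar{B}=0$ in (\ref{S1}) forces $C=-\omega P+C'$ with $C'\in\operatorname{Ker}(\mathcal{L}_{X})$, so the parametrization you derive is the correct one, and the ``$C=P+C'$'' printed in the corollary should be read modulo this $\mathbb{S}^{1}$-invariant rescaling by $-\omega$.
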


It follows from (\ref{Hom2}) that the necessary conditions for the
solvability of (\ref{Hom1}) are the following
\begin{equation}
X(m)=0 \ \ \rightarrow \ \ \langle B\rangle(m)=0,\label{Nec1}%
\end{equation}
\begin{equation}
X\wedge\langle B\rangle=0,\label{Nec2}
\end{equation}
\begin{equation}
\mathbf{i}_{d\omega}\langle B\rangle=0.\label{Nec3}
\end{equation}
Therefore, if one of these conditions does not hold, then equation
(\ref{Hom1}) is unsolvable.

\begin{corollary}\label{corol2}
There exist $k$-vector fields $A$ and $\bar{B}$ on $M$ satisfying the
equations
\begin{equation}
\mathcal{L}_{X}A=B-\bar{B}, \label{Co1}%
\end{equation}%
\begin{equation}
\mathcal{L}_{X}\bar{B}=0 \label{Co2}%
\end{equation}
if and only if
\begin{equation}
X\wedge\mathbf{i}_{d\omega}\langle B\rangle=0. \label{Co3}%
\end{equation}
Under this condition, all solutions $(A,\bar{B})$ to
(\ref{Co1}),(\ref{Co2}) are given by formulas  (\ref{S1}),
(\ref{S2}). Moreover, the $k$-vector field $A$ in (\ref{S2}) can be
represented in the form $\displaystyle
A=\frac{1}{\omega}\mathcal{S}(B)+( \mathbb{S}^{1}$-invariant
k-vector field $)$  if and only if
$\Upsilon\wedge\mathbf{i}_{d\omega}B=0$.
\end{corollary}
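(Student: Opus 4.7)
The plan is to leverage Theorem~\ref{kvect} and Corollary~\ref{corol}. The system (\ref{Co1})--(\ref{Co2}) differs from (\ref{H1})--(\ref{H2}) only by strengthening $\mathbb{S}^1$-invariance of $\bar B$ to $\mathcal{L}_X\bar B = 0$, and by (\ref{KerLX}) these differ precisely by the additional constraint $X \wedge \mathbf{i}_{d\omega}\bar B = 0$. So I will parametrize all $(A,\bar B)$ satisfying (\ref{H1})--(\ref{H2}) via Theorem~\ref{kvect}, then ask when the extra constraint holds automatically.

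For the equivalence with (\ref{Co3}): assuming a solution $(A,\bar B)$ of (\ref{Co1})--(\ref{Co2}) exists, (\ref{KerLX}) forces $\bar B$ to be $\mathbb{S}^1$-invariant, so Theorem~\ref{kvect} applies and $\bar B = \langle B\rangle + \tfrac{1}{\omega}X \wedge \mathbf{i}_{d\omega}C$. A short computation, using $\mathbf{i}_{d\omega}X = X(\omega) = 0$, the antiderivation rule for $\mathbf{i}_{d\omega}$, and $(\mathbf{i}_{d\omega})^2 = 0$, will give
\[
X \wedge \mathbf{i}_{d\omega}\!\left(\tfrac{1}{\omega}X \wedge \mathbf{i}_{d\omega}C\right) = 0,
\]
so that $X \wedge \mathbf{i}_{d\omega}\bar B = X \wedge \mathbf{i}_{d\omega}\langle B\rangle$, and the constraint in (\ref{KerLX}) becomes exactly (\ref{Co3}). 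Conversely, given (\ref{Co3}), I will feed any $C \in \chi^k_{\operatorname{inv}}(M)$ into (\ref{S1})--(\ref{S2}); Theorem~\ref{kvect} then produces $(A,\bar B)$ satisfying (\ref{H1})--(\ref{H2}), and the same identity yields $X \wedge \mathbf{i}_{d\omega}\bar B = 0$, hence $\mathcal{L}_X\bar B = 0$ by (\ref{KerLX}).

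For the representation claim, the task is to decide when the middle term $\tfrac{1}{\omega^3}X \wedge \mathcal{S}^2(\mathbf{i}_{d\omega}B)$ in (\ref{S2}) can be absorbed into the invariant summand $C$. The key facts are that $X$, $\omega$ and $d\omega$ are all $\mathbb{S}^1$-invariant (since $\omega$ is a first integral of $X$ and $[\Upsilon,X]=0$), so wedging with $X$, dividing by $\omega$, and contracting with $d\omega$ all commute with $\mathcal{A}$ and $\mathcal{S}$. For the direction ``$\Leftarrow$'', $\Upsilon \wedge \mathbf{i}_{d\omega}B = 0$ gives $X \wedge \mathbf{i}_{d\omega}B = 0$, so $X \wedge \mathcal{S}^2(\mathbf{i}_{d\omega}B) = \mathcal{S}^2(X \wedge \mathbf{i}_{d\omega}B) = 0$ and the middle term vanishes. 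For ``$\Rightarrow$'', equating (\ref{S2}) to $\tfrac{1}{\omega}\mathcal{S}(B) + D$ with $D$ invariant yields
\[
\tfrac{1}{\omega^3}X \wedge \mathcal{S}^2(\mathbf{i}_{d\omega}B) = D - C;
\]
the left side lies in $\operatorname{Im}\mathcal{S} = \operatorname{Ker}\mathcal{A}$ and the right in $\operatorname{Im}\mathcal{A}$, so both vanish by the splitting (\ref{Split}). This gives $\mathcal{S}^2(X \wedge \mathbf{i}_{d\omega}B) = 0$, and since $\mathcal{S}$ is a bijection on $\operatorname{Ker}\mathcal{A}$ (via (\ref{P7})), $X \wedge \mathbf{i}_{d\omega}B$ must be $\mathbb{S}^1$-invariant.

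The delicate step will be upgrading this invariance to $X \wedge \mathbf{i}_{d\omega}B = 0$: here I plan to invoke the standing hypothesis (\ref{Co3}) a second time, observing $\mathcal{A}(X \wedge \mathbf{i}_{d\omega}B) = X \wedge \mathbf{i}_{d\omega}\langle B\rangle = 0$, so that $X \wedge \mathbf{i}_{d\omega}B \in \operatorname{Ker}\mathcal{A} \cap \operatorname{Im}\mathcal{A} = \{0\}$. The main bookkeeping obstacle is keeping straight which identities rely on the antiderivation rule for $\mathbf{i}_{d\omega}$, and which on the commutativity of $\mathbb{S}^1$-invariant operations with $\mathcal{A}$ and $\mathcal{S}$.
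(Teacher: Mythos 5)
Your argument is correct and is exactly the derivation the paper intends but omits: it combines Theorem \ref{kvect} with the kernel description (\ref{KerLX}), the computation $\mathbf{i}_{d\omega}\bigl(X\wedge\mathbf{i}_{d\omega}C\bigr)=(\mathbf{i}_{d\omega}X)\,\mathbf{i}_{d\omega}C-X\wedge\mathbf{i}_{d\omega}^{2}C=0$, and the observation that the middle term of (\ref{S2}) lies in $\operatorname{Im}\mathcal{S}=\operatorname{Ker}\mathcal{A}$ and hence is $\mathbb{S}^{1}$-invariant only if it vanishes. The commutation facts you rely on ($X$, $\omega$ and $d\omega$ are $\mathbb{S}^{1}$-invariant, so wedging with $X$, multiplying by powers of $\omega$ and contracting with $d\omega$ all commute with $\mathcal{A}$ and $\mathcal{S}$) are valid, and invoking the standing hypothesis (\ref{Co3}) to upgrade the invariance of $X\wedge\mathbf{i}_{d\omega}B$ to its vanishing is legitimate because the final claim is asserted under that condition.
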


Let us consider some particular cases. In the case $k=0$, formulas
(\ref{S1}) and (\ref{S2}) coincide with (\ref{Int4}), (\ref{Int5}).
The well-known solvability condition for the equation
$\mathcal{L}_{X}F=G$ reads $\langle G\rangle=0$.

In the case $k=1$, by Theorem \ref{kvect} , Corollary \ref{corol}
and Corollary \ref{corol2} (where $A=Z$ and $B=W$ are vector fields
on $M$) we derive the following facts. For a given
$W\in\mathfrak{X}(M)$, all vector fields $Z\in\mathfrak{X}(M)$ and
$\bar{W}\in\mathfrak{X}_{\operatorname{inv}}(M)$ on $M$ satisfying
the equation
\begin{equation}
[ X,Z]=W-\bar{W},\label{IL1}
\end{equation}
are given by the formulas
\begin{equation}
\bar{W}=\langle
W\rangle+\frac{1}{\omega}\mathcal{L}_{Y}(\omega)X,\label{IL3}
\end{equation}
\begin{equation}
Z=\frac{1}{\omega}\mathcal{S}(W)+\frac{1}{\omega^{3}}\mathcal{S}
^{2}(\mathcal{L}_{W}\omega)X+Y,\label{IL4}
\end{equation}
where $Y\in\mathfrak{X}_{\operatorname{inv}}(M)$. The second term in
(\ref{IL4}) can be omitted only if $\mathcal{L}_{W}\omega=0.$
Moreover, the homological equation
\begin{equation}
[ X,Z]=W\label{IL6}
\end{equation}
for $Z$ is solvable if and only if $\langle
W\rangle=\mathcal{L}_{\tilde{Y}}(\omega)X$ for a certain
$\mathbb{S}^{1}$-invariant vector field $\tilde{Y}$. In particular ,
the necessary condition (\ref{Nec3})  for the solvability of
(\ref{IL6}) takes the form
\begin{equation}
\mathcal{L}_{\langle W\rangle}\omega=0.\label{IL11}
\end{equation}
Let $\operatorname{Reg}(X)=\{m\in M\mid X(m)\neq0\}$ be the set of
points regular of $X$. If $\operatorname{Reg}(X)$ is everywhere
dense in $M$, then the kernel of the Lie derivative
$\mathcal{L}_{X}:$ $\mathfrak{X}(M)\rightarrow$ $\mathfrak{X}(M)$ is
\begin{equation}
\operatorname{Ker}(\mathcal{L}_{X})=\mathfrak{X}_{\operatorname{inv}}
(M)\cap\{Y\in\mathfrak{X}(M)\mid\mathcal{L}_{Y}\omega=0\}.\label{IL5}%
\end{equation}
Moreover, it follows from (\ref{Co3}) that there exist vector fields $Z$ and
$\bar{W}$ on $M$ satisfying homological equation (\ref{IL1}) and the condition
$[X,\bar{W}]=0$ if and only if condition (\ref{IL11}) holds.

\textbf{Homological equations for $k$-Forms.} Consider the space
$\Omega^{k}(M)=\operatorname*{Sec}(\wedge^{k}T^{\ast}M)$ of $k$-forms on $M$.
Then, subspace $\Omega^{k}(M)\subset\mathcal{T}_{k}^{0}(M)$ is an
invariant with respect to the action of the operators $\mathcal{L}_{\Upsilon
},\mathcal{S}$ and $\mathcal{A}$. By $\mathbf{i}_{Y}\alpha$ $\in\Omega
^{k-1}(M)$ we denote the interior product of a vector field $Y$ and a $k$-form
$\alpha$ on $M$ which is defined by the usual formula: $(\mathbf{i}_{Y}%
\alpha)(Y_{1},...,Y_{k-1})=\alpha(Y,Y_{1},...,Y_{k-1})$. Let $\Omega
_{\operatorname{inv}}^{k}(M)=\operatorname{Ker}\mathcal{L}_{\Upsilon}$
and $\Omega_{0}^{k}(M)=\operatorname{Im}\mathcal{L}_{\Upsilon}$.
Then, we have the $\mathbb{S}^{1}$-invariant splitting
\begin{equation}
\Omega^{k}(M)=\Omega_{\operatorname{inv}}^{k}(M)\oplus\Omega_{0}.
^{k}(M),\label{DEC2}%
\end{equation}
There is the following covariant analog of Theorem \ref{kvect}.

\begin{theorem}\label{homolk}
For a given $\eta\in\Omega^{k}(M)$, all $k$-forms $\theta$ and $\bar{\eta}$ on
$M$ satisfying the homological equation
\begin{equation}
\mathcal{L}_{X}\theta=\eta-\bar{\eta} \label{Fom1}%
\end{equation}
and the condition
\begin{equation}
\bar{\eta}\text{ is }\mathbb{S}^{1}\text{-invariant} \label{Fom2},
\end{equation}
are represented as
\begin{equation}
\bar{\eta}=\langle \eta\rangle-\frac{1}{\omega}d\omega\wedge\mathbf{i}_{X}\mu, \label{Fom3}%
\end{equation}%
\begin{equation}
\theta=\frac{1}{\omega}\mathcal{S}(\eta)-\frac{1}{\omega^{3}}d\omega
\wedge\mathcal{S}^{2}(\mathbf{i}_{X}\eta)+\mu, \label{Fom4}%
\end{equation}
where $\mu$ $\in\Omega_{\operatorname{inv}}^{k}(M)$ is an arbitrary
$\mathbb{S}^{1}$-invariant $k$-form.
\end{theorem}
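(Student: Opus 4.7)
The strategy is to mirror the proof of Theorem~\ref{kvect}, replacing the multivector identity (\ref{S3}) by its differential-form analog. Using Cartan's formula $\mathcal{L}_{fY}=d\,\mathbf{i}_{fY}+\mathbf{i}_{fY}\,d$ with $f=\omega$ and $Y=\Upsilon$, I would first record the identity
\[
\mathcal{L}_{\omega\Upsilon}\theta=\omega\,\mathcal{L}_{\Upsilon}\theta+d\omega\wedge\mathbf{i}_{\Upsilon}\theta,
\]
so that, writing $X=\omega\Upsilon$ and dividing by $\omega$, equation (\ref{Fom1}) becomes
\[
\mathcal{L}_{\Upsilon}\theta+\frac{1}{\omega}\,d\omega\wedge\mathbf{i}_{\Upsilon}\theta=\frac{1}{\omega}(\eta-\bar{\eta}).
\]

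Next, I would apply the averaging operator $\mathcal{A}$ to both sides. Since $\omega$ and $\Upsilon$ are $\mathbb{S}^{1}$-invariant, multiplication by $1/\omega$, wedging with $d\omega$ and the interior product $\mathbf{i}_{\Upsilon}$ all commute with $\mathcal{A}$; combined with (\ref{P3}) and the invariance condition (\ref{Fom2}), this yields formula (\ref{Fom3}) after setting $\mu:=\langle\theta\rangle$. Then, decomposing $\theta=\mu+\theta_{0}$ and $\eta=\langle\eta\rangle+\eta_{0}$ according to (\ref{DEC2}), substituting (\ref{Fom3}) into the rewritten equation, and cancelling the invariant parts (using $\mathcal{L}_{\Upsilon}\mu=0$), the problem is reduced to finding $\theta_{0}\in\Omega_{0}^{k}(M)$ satisfying
\[
\mathcal{L}_{\Upsilon}\theta_{0}+\frac{1}{\omega}\,d\omega\wedge\mathbf{i}_{\Upsilon}\theta_{0}=\frac{1}{\omega}\eta_{0}.
\]

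I would then introduce the form-dual of the ansatz used in Theorem~\ref{kvect}: set $\theta_{0}=\frac{1}{\omega}\mathcal{S}(\eta)+\tilde{\theta}_{0}$ (noting $\mathcal{S}(\eta)=\mathcal{S}(\eta_{0})$ by (\ref{P2})). Property (\ref{P1}) applied to $\eta_{0}\in\operatorname{Ker}\mathcal{A}$ leaves the residual equation
\[
\mathcal{L}_{\Upsilon}\tilde{\theta}_{0}+\frac{1}{\omega}\,d\omega\wedge\mathbf{i}_{\Upsilon}\tilde{\theta}_{0}=-\frac{1}{\omega^{2}}\,d\omega\wedge\mathbf{i}_{\Upsilon}\mathcal{S}(\eta).
\]
The key observation is that $\mathbf{i}_{\Upsilon}d\omega=\mathcal{L}_{\Upsilon}\omega=0$ and $d\omega\wedge d\omega=0$, the form-dual of the identity $\mathbf{i}_{d\omega}\Upsilon=0$ exploited in (\ref{PSV5}). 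This dictates the ansatz
\[
\tilde{\theta}_{0}=-\frac{1}{\omega^{3}}\,d\omega\wedge\mathcal{S}^{2}(\mathbf{i}_{X}\eta),
\]
for which the wedge term in the residual equation vanishes identically, while the identity $\mathcal{L}_{\Upsilon}\mathcal{S}^{2}=\mathcal{S}$ (obtained from (\ref{P1}) and $\mathcal{A}\circ\mathcal{S}=0$) produces exactly the required right-hand side. Assembling $\theta=\mu+\theta_{0}$ gives (\ref{Fom4}) with $\mu\in\Omega_{\operatorname{inv}}^{k}(M)$ arbitrary.

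The main obstacle I anticipate is not conceptual but bookkeeping: at each step one must verify that $\mathcal{S}$, $\mathcal{A}$, $\mathbf{i}_{\Upsilon}$, $\mathbf{i}_{X}$ and wedging with $d\omega$ mutually commute (a consequence of the $\mathbb{S}^{1}$-invariance of $\omega$ and $\Upsilon$), and that each ansatz lands in the correct summand $\Omega_{0}^{k}(M)$ or $\Omega_{\operatorname{inv}}^{k}(M)$ so that (\ref{P1}) and (\ref{P2}) can be applied. Once this is under control, the calculation proceeds in direct parallel with that of Theorem~\ref{kvect}, with the roles of $\Upsilon\wedge\cdot$ and $\mathbf{i}_{d\omega}$ interchanged with $d\omega\wedge\cdot$ and $\mathbf{i}_{\Upsilon}$ respectively.
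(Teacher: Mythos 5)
Your proposal is correct and takes essentially the same approach as the paper: the paper's own ``proof'' is just the remark that the argument goes in the same line as that of Theorem~\ref{kvect}, with identity (\ref{S3}) replaced by its covariant analog $\mathcal{L}_{\omega\Upsilon}\theta=\omega\mathcal{L}_{\Upsilon}\theta+d\omega\wedge\mathbf{i}_{\Upsilon}\theta$, and your write-up is precisely that adaptation, with the key cancellations ($\mathbf{i}_{\Upsilon}d\omega=0$ and $d\omega\wedge d\omega=0$ playing the role of $\mathbf{i}_{d\omega}\Upsilon=0$) and all signs handled correctly.
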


The proof of this theorem goes in the same line as the proof Theorem \ref{kvect},
where instead of identity (\ref{S3}) we have to use its covariant analog:
$\mathcal{L}_{\omega\Upsilon}\theta=\omega\mathcal{L}_{\Upsilon}%
\theta+d\omega\wedge\mathbf{i}_{\Upsilon}\theta$.

Let $\eta=\eta_{0}+\langle \eta\rangle$. Then, the general solution
$\theta$ in (\ref{Fom4}) has the representation $\theta=\theta_{0}$
$+\mu$, where
$\theta_{0}\in\Omega_{0}^{k}(M)$ is uniquely determined by $\eta_{0}$,%
\begin{equation}
\theta_{0}==\frac{1}{\omega}\mathcal{S}(\eta_{0})-\frac{1}{\omega^{3}}%
d\omega\wedge\mathcal{S}^{2}(\mathbf{i}_{X}\eta_{0})\label{Fom5}%
\end{equation}

From Theorem \ref{homolk}, we deduce the following consequences

\begin{corollary}\label{corolker}
The kernel of the Lie derivative $\mathcal{L}_{X}: \Omega^{k}(M)\rightarrow\Omega^{k}(M)$ is
\begin{equation}
\operatorname{Ker}(\mathcal{L}_{X})=\Omega_{\operatorname{inv}}^{k}%
(M)\cap\{\mu\in\Omega^{k}(M)\mid d\omega\wedge\mathbf{i}_{X}\mu=0\}.
\label{Fom10}
\end{equation}
For a given $\eta\in\Omega^{k}(M)$ $(k\geq1)$, the homological equation
\begin{equation}
\mathcal{L}_{X}\theta=\eta\label{Fom11}
\end{equation}
is solvable relative to a $k$-form $\theta$ on $M$ if and only if
\begin{equation}
\langle \eta\rangle=d\omega\wedge\mathbf{i}_{X}\alpha\label{Fom12}%
\end{equation}
for a certain $\mathbb{S}^{1}$-invariant $k$-form $\alpha\in\Omega
_{\operatorname{inv}}^{k-1}(M)$.
\end{corollary}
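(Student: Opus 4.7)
My plan is to obtain both assertions as direct specializations of Theorem \ref{homolk}, which already parametrizes all pairs $(\theta,\bar{\eta})$ solving \eqref{Fom1}--\eqref{Fom2} by an arbitrary $\mathbb{S}^{1}$-invariant $k$-form $\mu\in\Omega^{k}_{\operatorname{inv}}(M)$. In each part of the corollary the task reduces to determining when $\mu$ can be chosen so that the accompanying $\bar{\eta}$ produced by \eqref{Fom3} equals zero; the whole argument is thus a routine inspection of formulas \eqref{Fom3}--\eqref{Fom4}.

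For the description of $\operatorname{Ker}(\mathcal{L}_{X})$ I would specialize to $\eta=0$: a $k$-form $\theta$ lies in $\operatorname{Ker}(\mathcal{L}_{X})$ iff the pair $(\theta,0)$ solves \eqref{Fom1}--\eqref{Fom2} with $\eta=0$. Substituting $\eta=0$ into \eqref{Fom4} yields $\theta=\mu$ for some $\mu\in\Omega^{k}_{\operatorname{inv}}(M)$, while \eqref{Fom3} reduces to $\bar{\eta}=-\tfrac{1}{\omega}\,d\omega\wedge\mathbf{i}_{X}\mu$. Demanding $\bar{\eta}=0$ is therefore equivalent to $d\omega\wedge\mathbf{i}_{X}\mu=0$, which is precisely \eqref{Fom10}.

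For the solvability criterion I would argue analogously: $\mathcal{L}_{X}\theta=\eta$ admits a solution iff Theorem \ref{homolk} produces a pair $(\theta,\bar{\eta})$ with $\bar{\eta}=0$. By \eqref{Fom3} this is equivalent to $\langle\eta\rangle=\tfrac{1}{\omega}\,d\omega\wedge\mathbf{i}_{X}\mu$ for some $\mu\in\Omega^{k}_{\operatorname{inv}}(M)$. Since $\omega$ is positive and $\mathbb{S}^{1}$-invariant, the rescaling $\alpha:=\tfrac{1}{\omega}\mu$ sets up a bijection between $\mathbb{S}^{1}$-invariant $\mu$'s and $\mathbb{S}^{1}$-invariant $\alpha$'s and transforms the condition into $\langle\eta\rangle=d\omega\wedge\mathbf{i}_{X}\alpha$, that is \eqref{Fom12}. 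Conversely, given such an $\alpha$, I would set $\mu:=\omega\alpha$ and read off an explicit $\theta$ from \eqref{Fom4}, verifying that its companion $\bar{\eta}$ from \eqref{Fom3} vanishes. No serious obstacle is expected beyond this minor bookkeeping; the only point worth checking is that multiplication by $\omega$ preserves $\mathbb{S}^{1}$-invariance, which is immediate from $\mathcal{L}_{\Upsilon}\omega=0$.
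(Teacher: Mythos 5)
Your proposal is correct and matches the paper's (implicit) argument: the corollary is stated there as a direct consequence of Theorem \ref{homolk}, obtained exactly by specializing the parametrization \eqref{Fom3}--\eqref{Fom4} to $\eta=0$ for the kernel and by imposing $\bar{\eta}=0$ (absorbing the factor $\frac{1}{\omega}$ into the invariant form $\alpha=\frac{1}{\omega}\mu$) for the solvability criterion, in full analogy with Corollary \ref{corol} in the multivector case. The only point worth flagging is that your reading of $\alpha$ as an invariant $k$-form (so that $d\omega\wedge\mathbf{i}_{X}\alpha$ has degree $k$) is the correct one, the subscript $k-1$ in the statement of \eqref{Fom12} being a typo.
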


It follows from (\ref{Fom12}) that the necessary conditions for the
solvability of equation (\ref{Fom11}) are
\begin{equation}
X(m)=0\Longrightarrow\text{ }\langle \eta\rangle(m)=0, \label{Fom13}
\end{equation}
\begin{equation}
d\omega\wedge\langle \eta\rangle=0, \label{Fom14}
\end{equation}
\begin{equation}
\mathbf{i}_{X}\langle \eta\rangle=0. \label{Fom15}
\end{equation}

\begin{corollary}\label{closedcorol}
There exist $k$-forms $\theta$ and $\eta$ on $M$ satisfying the equations
\begin{equation}
\mathcal{L}_{X}\theta=\eta-\bar{\eta}, \label{Fom16}%
\end{equation}%
\begin{equation}
\mathcal{L}_{X}\bar{\eta}=0, \label{Fom17}%
\end{equation}
if and only if the following condition holds%
\begin{equation}
d\omega\wedge\mathbf{i}_{X}\langle \eta\rangle=0 \label{Fom18}%
\end{equation}
\end{corollary}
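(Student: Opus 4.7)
The plan is to parallel the strategy of Corollary~\ref{corol2}, translating each step from $k$-vector fields to $k$-forms via the covariant identity $\mathcal{L}_{\omega\Upsilon}\theta = \omega\mathcal{L}_\Upsilon\theta + d\omega\wedge\mathbf{i}_\Upsilon\theta$. Two preliminary observations will drive both directions. First, since $\omega$ is a first integral of $X$, both $\omega$ and $d\omega$ are $\mathbb{S}^1$-invariant; combined with the fact that $\mathbf{i}_\Upsilon$ commutes with pullback by $\operatorname{Fl}_\Upsilon^t$ (because $\Upsilon$ itself is $\mathbb{S}^1$-invariant), this implies that the averaging operator $\mathcal{A}$ commutes with $\mathcal{L}_X$. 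Second, on an $\mathbb{S}^1$-invariant form $\xi$ the identity above collapses to $\mathcal{L}_X\xi = \frac{1}{\omega}d\omega\wedge\mathbf{i}_X\xi$, so $\mathcal{L}_X\xi = 0$ is equivalent to the purely algebraic condition $d\omega\wedge\mathbf{i}_X\xi = 0$.

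For the necessary direction, I would average (\ref{Fom16}) and (\ref{Fom17}) to obtain $\mathcal{L}_X\langle\theta\rangle = \langle\eta\rangle - \langle\bar\eta\rangle$ and $\mathcal{L}_X\langle\bar\eta\rangle = 0$. Applying the second preliminary observation to the $\mathbb{S}^1$-invariant form $\langle\bar\eta\rangle$ yields $d\omega\wedge\mathbf{i}_X\langle\bar\eta\rangle = 0$. Hitting the first averaged equation with $d\omega\wedge\mathbf{i}_X(\cdot)$ kills its left-hand side via the identities $\mathbf{i}_X d\omega = \mathcal{L}_X\omega = 0$ and $\mathbf{i}_X^2 = 0$, and condition (\ref{Fom18}) drops out.

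For sufficiency, I would take $\mu = 0$ in Theorem~\ref{homolk}, producing $\bar\eta := \langle\eta\rangle$ together with $\theta$ given by (\ref{Fom4}). Theorem~\ref{homolk} guarantees (\ref{Fom16}) and the $\mathbb{S}^1$-invariance of $\bar\eta$, so the remaining condition $\mathcal{L}_X\bar\eta = 0$ reduces, again by the second preliminary observation, to $d\omega\wedge\mathbf{i}_X\langle\eta\rangle = 0$, which is precisely (\ref{Fom18}). The main obstacle is bookkeeping rather than conceptual: the vanishing facts $\mathbf{i}_X(d\omega\wedge\mathbf{i}_X(\cdot)) = 0$ and $[\mathcal{A},\mathcal{L}_X] = 0$ both require care, but each follows cleanly from $\mathcal{L}_X\omega = 0$ together with the $\mathbb{S}^1$-invariance of $\omega$ and $d\omega$.
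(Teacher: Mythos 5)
Your argument is correct and follows essentially the route the paper intends: sufficiency by taking $\mu=0$ in Theorem~\ref{homolk} and checking $\mathcal{L}_X\langle\eta\rangle=\frac{1}{\omega}\,d\omega\wedge\mathbf{i}_X\langle\eta\rangle$, and necessity via the kernel description underlying (\ref{Fom10}). The only (harmless) cosmetic difference is that you obtain the necessity by averaging the equations directly, using $[\mathcal{A},\mathcal{L}_X]=0$ and $\mathbf{i}_Xd\omega=0$, rather than by substituting the general solution formula (\ref{Fom3}) into the condition $d\omega\wedge\mathbf{i}_X\bar{\eta}=0$; both reduce to the same computation.
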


In the case $k=1$, for a given 1-form $\eta\in$ $\Omega^{1}(M)$,
formulas (\ref{Fom3}), (\ref{Fom4}) for solutions
$(\theta,\bar{\eta})$ of problem (\ref{Fom1}), (\ref{Fom2}) can be
written as follows
\begin{equation}
\bar{\eta}=\langle \eta\rangle-\frac{1}{\omega}(\mathbf{i}_{X}\mu)d\omega,\label{Fom19}%
\end{equation}%
\begin{equation}
\theta=\frac{1}{\omega}\mathcal{S}(\eta)-\frac{1}{\omega^{3}}\mathcal{S}%
^{2}(\mathbf{i}_{X}\eta)d\omega+\mu,\label{Fom20}%
\end{equation}
with $\mu$ $\in\Omega_{\operatorname{inv}}^{1}(M)$. The solvability condition
for homological equation (\ref{Fom11}) reads%
\begin{equation}
\langle \eta\rangle=(\mathbf{i}_{X}\alpha)d\omega\label{CSP}%
\end{equation}
for a certain $\mathbb{S}^{1}$-invariant 1-form $\alpha\in\Omega
_{\operatorname{inv}}^{1}(M)$. Let $\operatorname{Reg}(\omega)=\{m\in M\mid
d_{m}\omega\neq0\}$ be the set of regular points of the frequency function
$\omega$. If $\operatorname{Reg}(\omega)$ is everywhere dense in
$M$, then
\begin{equation}
\operatorname{Ker}(\mathcal{L}_{X})=\Omega_{\operatorname{inv}}^{1}%
(M)\cap\{\mu\in\Omega^{k}(M)\mid\mathbf{i}_{X}\mu=0\}.\label{Fom21}%
\end{equation}
Moreover, condition (\ref{Fom18}) for the solvability of problem
(\ref{Fom16}), (\ref{Fom17}) is equivalent to the following
$\mathbf{i}_{X}\langle \eta\rangle=0$.

To end this section, let us consider the case of closed forms. Let
$d:\Omega^{k}(M)\rightarrow\Omega^{k+1}(M)$ be the exterior
derivative. By standard properties of the exterior derivative, we
conclude that $d$ commutes with operators
$\mathcal{L}_{\Upsilon},\mathcal{S}$ and $\mathcal{A}$, in
particular, $\langle d\eta\rangle=d\langle \eta\rangle$ for any
$\eta\in\Omega^{k}(M)$. Moreover, splitting (\ref{DEC2}) is
invariant with respect to $d$, in the sense that
\[
d\eta=\langle d\eta\rangle+(d\eta)_{0}=d\langle
\eta\rangle+d\eta_{0},
\]
where $\eta=\langle \eta\rangle+\eta_{0}$. It follows that if $\eta$
is closed then, the components $\langle \eta\rangle$ and $\eta_{0}$
are also closed $k$-forms. In this case, according to (\ref{Fom5}),
a solution to the equation $\mathcal{L}_{\Upsilon
}\theta_{0}=\eta_{0}$ is given by
$\theta_{0}=\mathcal{S}(\eta_{0})$. Then,
$d\theta_{0}=\mathcal{S}(d\eta_{0})=0$ and hence $\eta_{0}=\mathcal{L}%
_{\Upsilon}\theta_{0}=d\circ\mathbf{i}_{\Upsilon}\theta_{0}$. This proves the
following assertion.

\begin{proposition}
For every closed $k$-form $\eta$ on $M$, we have the decomposition
\begin{equation}
\eta=\langle \eta\rangle+d(\mathbf{i}_{\Upsilon}\theta_{0}),
\label{Clos1}
\end{equation}
where $\theta_{0}=\frac{1}{2\pi}\int_{0}^{2\pi}(t-\pi)(\operatorname{Fl}
_{\Upsilon}^{t})^{\ast}\eta dt$.
\end{proposition}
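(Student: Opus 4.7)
The plan is to follow exactly the outline already sketched in the paragraph preceding the proposition, but to spell out each step and highlight why Cartan's magic formula is what turns the identity $\mathcal{L}_{\Upsilon}\theta_{0}=\eta_{0}$ into an \emph{exact} representative. First I would split $\eta$ according to the $\mathbb{S}^{1}$-invariant decomposition (\ref{DEC2}):
\[
\eta=\langle\eta\rangle+\eta_{0},\qquad \eta_{0}\in\Omega_{0}^{k}(M).
\]
Since $d$ commutes with the pull-backs $(\operatorname{Fl}_{\Upsilon}^{t})^{\ast}$, it commutes with $\mathcal{A}$ and $\mathcal{S}$, so this splitting is preserved by $d$; in particular the closedness of $\eta$ forces both $\langle\eta\rangle$ and $\eta_{0}$ to be closed.

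Next, I would introduce $\theta_{0}:=\mathcal{S}(\eta)$. Using $\mathcal{S}\circ\mathcal{A}=0$ (relation (\ref{P2})) one has $\theta_{0}=\mathcal{S}(\eta_{0})$, so by property (\ref{P1}),
\[
\mathcal{L}_{\Upsilon}\theta_{0}=(\operatorname{id}-\mathcal{A})(\eta_{0})=\eta_{0}.
\]
Because $d$ commutes with $\mathcal{S}$ and $d\eta=0$, the form $\theta_{0}$ is itself closed: $d\theta_{0}=\mathcal{S}(d\eta)=0$.

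Now I would invoke Cartan's magic formula $\mathcal{L}_{\Upsilon}=d\,\mathbf{i}_{\Upsilon}+\mathbf{i}_{\Upsilon}\,d$ applied to $\theta_{0}$: the second term vanishes because $d\theta_{0}=0$, so
\[
\eta_{0}=\mathcal{L}_{\Upsilon}\theta_{0}=d(\mathbf{i}_{\Upsilon}\theta_{0}).
\]
Adding $\langle\eta\rangle$ to both sides yields the claimed decomposition (\ref{Clos1}), and the explicit integral formula for $\theta_{0}$ is just the definition (\ref{S}) of $\mathcal{S}$ applied to $\eta$.

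There is no real obstacle here; the only point that needs a moment's care is verifying that $d$ genuinely commutes with $\mathcal{A}$ and $\mathcal{S}$, which reduces to interchanging $d$ with the integrals in (\ref{Av}) and (\ref{S}) via naturality of $d$ under pull-backs. Everything else is a clean application of properties (\ref{P1}), (\ref{P2}) and Cartan's identity to the closed component $\eta_{0}$.
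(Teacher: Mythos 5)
Your argument is correct and coincides with the paper's own proof (given in the paragraph preceding the proposition): split $\eta=\langle\eta\rangle+\eta_{0}$, observe both components are closed because $d$ commutes with $\mathcal{A}$ and $\mathcal{S}$, set $\theta_{0}=\mathcal{S}(\eta_{0})=\mathcal{S}(\eta)$ so that $\mathcal{L}_{\Upsilon}\theta_{0}=\eta_{0}$ by (\ref{P1}), and use $d\theta_{0}=0$ together with Cartan's formula to get $\eta_{0}=d(\mathbf{i}_{\Upsilon}\theta_{0})$. No gaps; nothing further is needed.
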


\section{Normal forms}

Suppose we start again with a vector field $X$ on $M$ whose flow is periodic
with frequency function $\omega:M\rightarrow\mathbb{R}$.

\begin{proposition}\label{propnorm}
Let $P_{\varepsilon}=X+\varepsilon W$ be a perturbed vector field on $M$,
where $\varepsilon$ is a small parameter. Let
\begin{equation}
\Phi_{\varepsilon}=\operatorname{Fl}_{Z}^{t}\mid_{t=\varepsilon} \label{NIT}
\end{equation}
be the time-$\varepsilon$ flow of the vector field
\begin{equation}
Z=\frac{1}{\omega}\mathcal{S}(W)+\frac{1}{\omega^{3}}\mathcal{S}
^{2}(\mathcal{L}_{W}\omega)X+Y, \label{ED1}
\end{equation}
where $Y$ is an $\mathbb{S}^{1}$-invariant vector field on $M$.
Then, for a given open domain $N\subset M$ with compact closure,
there exists a constant $\delta >0$ such that formula
(\ref{NIT}) defines a near identity transformation
$\Phi_{\varepsilon}:N\rightarrow M$ for $\varepsilon\in
(-\delta,\delta)$ which brings $P_{\varepsilon}$ into the form
\begin{equation}
(\Phi_{\varepsilon})^{\ast}{}P_{\varepsilon}=X+\varepsilon\bar{W}%
+O(\varepsilon^{2}) \label{ED2}%
\end{equation}%
\begin{equation}
\bar{W}:=\langle W\rangle+\mathcal{L}_{Y}(\ln\omega)X. \label{ED7}%
\end{equation}
If the frequency function $\omega$ is a first integral of the $\mathbb{S}^{1}
$-average of $W$,
\begin{equation}
\mathcal{L}_{\langle W\rangle}\omega=0\text{ on }M,  \label{ED3}
\end{equation}
then the mapping $\Phi_{\varepsilon}$ is a normalization transformation of
first order for $P_{\varepsilon}$ relative to $X,$
\begin{equation}
\lbrack X,\bar{W}]=0, \label{ED4}%
\end{equation}
for arbitrary choice of a vector field $Y\in\mathfrak{X}_{\operatorname*{inv}%
}(M)$ in (\ref{ED1}).
\end{proposition}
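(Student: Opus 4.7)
The plan is to Taylor expand the pullback $(\Phi_\varepsilon)^*P_\varepsilon$ in $\varepsilon$ about $0$ and identify the linear coefficient with $\bar W$ by invoking the vector-field homological equation solution (\ref{IL1})--(\ref{IL4}), derived in Section~3 as the case $k=1$ of Theorem~\ref{kvect}. The crucial observation is that the vector field $Z$ prescribed by (\ref{ED1}) is exactly the solution formula (\ref{IL4}) of the equation $[X,Z]=W-\bar W$; once this identification is recognized, the whole argument reduces to a short Taylor expansion plus one bracket calculation verifying (\ref{ED4}) under (\ref{ED3}). The near-identity claim itself is standard: since $Z$ is smooth on $M$ and $\bar N$ is compact, there exists $\delta>0$ such that $\operatorname{Fl}_Z^t(m)$ is defined for $m\in N$ and $|t|<\delta$, and $\Phi_\varepsilon=\operatorname{Fl}_Z^\varepsilon:N\to M$ is then a diffeomorphism onto its image depending smoothly on $\varepsilon$ with $\Phi_0=\operatorname{id}$.

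For (\ref{ED2}), I would use the identity $\frac{d}{d\tau}(\operatorname{Fl}_Z^\tau)^*\Xi=(\operatorname{Fl}_Z^\tau)^*(\mathcal{L}_Z\Xi)$ recalled in Section~2 and Taylor expand $\tau\mapsto(\operatorname{Fl}_Z^\tau)^*P_\varepsilon$ about $\tau=0$, evaluating at $\tau=\varepsilon$:
\begin{equation*}
(\Phi_\varepsilon)^*P_\varepsilon=P_\varepsilon+\varepsilon\,\mathcal{L}_Z P_\varepsilon+O(\varepsilon^2)=X+\varepsilon\bigl(W-[X,Z]\bigr)+O(\varepsilon^2),
\end{equation*}
with remainder uniform on $N$ because it is an integral remainder of smooth tensor fields over a compact set. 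By (\ref{IL1})--(\ref{IL4}), the $Z$ in (\ref{ED1}) satisfies $[X,Z]=W-\bar W$ with $\bar W$ given by (\ref{IL3}); and since $\frac{1}{\omega}\mathcal{L}_Y(\omega)=\mathcal{L}_Y(\ln\omega)$, formula (\ref{IL3}) coincides with (\ref{ED7}). The first-order coefficient therefore collapses to $\bar W$, proving (\ref{ED2}).

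To derive (\ref{ED4}) under (\ref{ED3}), I would first observe that $\bar W$ is $\mathbb{S}^1$-invariant: $\langle W\rangle$ is an average, while $\mathcal{L}_Y(\ln\omega)\,X$ is invariant because $Y$, $\omega$, and $X$ all are (in particular $[\Upsilon,X]=(\mathcal{L}_\Upsilon\omega)\Upsilon=0$, since $\mathcal{L}_\Upsilon\omega=\frac{1}{\omega}\mathcal{L}_X\omega=0$). Applying the Leibniz rule with $X=\omega\Upsilon$ then gives
\begin{equation*}
[X,\bar W]=\omega[\Upsilon,\bar W]-(\mathcal{L}_{\bar W}\omega)\Upsilon=-(\mathcal{L}_{\bar W}\omega)\Upsilon,
\end{equation*}
and $\mathcal{L}_{\bar W}\omega=\mathcal{L}_{\langle W\rangle}\omega+\mathcal{L}_Y(\ln\omega)\mathcal{L}_X\omega=\mathcal{L}_{\langle W\rangle}\omega$ vanishes by (\ref{ED3}), so $[X,\bar W]=0$ for every $Y\in\mathfrak{X}_{\operatorname*{inv}}(M)$.

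There is no deep obstacle here: the argument is essentially a one-line Taylor expansion plus a short Leibniz computation, and the content of the proposition lies entirely in the homological-equation machinery of Section~3. The only step requiring care is bookkeeping --- matching (\ref{ED1}) to the solution formula (\ref{IL4}) so that the $\varepsilon^1$ coefficient collapses to precisely (\ref{ED7}), and keeping track of the sign convention $\mathcal{L}_Z X=[Z,X]=-[X,Z]$.
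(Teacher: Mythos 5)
Your proposal is correct and follows essentially the same route the paper indicates: the paper's proof is a one-line appeal to ``standard Lie transform arguments'' together with the $k=1$ homological-equation results of Section~3, and you have simply filled in those details — the first-order Taylor expansion of $(\operatorname{Fl}_Z^\varepsilon)^*P_\varepsilon$, the identification of (\ref{ED1}) with the solution formula (\ref{IL4}) so that $W-[X,Z]=\bar W$ with $\bar W$ as in (\ref{IL3})$\,=\,$(\ref{ED7}), and the Leibniz computation $[X,\bar W]=-(\mathcal{L}_{\langle W\rangle}\omega)\Upsilon$ giving (\ref{ED4}) under (\ref{ED3}) for any invariant $Y$. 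All the bookkeeping (signs, the equality $\tfrac{1}{\omega}\mathcal{L}_Y\omega=\mathcal{L}_Y(\ln\omega)$, and the $\mathbb{S}^1$-invariance of $\bar W$) checks out.
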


The proof of this proposition follows from the standard Lie
transform \cite{Dep-69} arguments and Corollary \ref{closedcorol} for the case
$k=1$. Remark that if $\operatorname{Reg}(X)$ is everywhere dense in
$M$, then (\ref{ED3}) becomes also a necessary condition for mapping
$\Phi_{\varepsilon}$ (\ref{NIT}) to be a normalization
transformation.

Now, let us see how, in the context of the normalization procedure,
one can use a freedom in the definition of $\Phi _{\varepsilon}$.
Consider the perturbed vector field $P_{\varepsilon }=X+\varepsilon
W$ and assume that the $\mathbb{S}^{1}$-action with infinitesimal
generator $\Upsilon$ $=\frac{1}{\omega}X$ is free on $M$. Then, the
orbit space $\mathcal{O}=M /\mathbb{S}^{1}$ is a smooth manifold and
the projection $\rho:M\rightarrow$ $\mathcal{O}$ is a $\mathbb{S}^{1}%
$-principle bundle. In this case, the frequency function is of the form
$\omega=\omega_{\mathcal{O}}\circ\rho$ for a certain $\omega_{\mathcal{O}}\in
C^{\infty}(\mathcal{O})$. Let $\operatorname{Ver}=\operatorname{Span}%
\{\Upsilon\}$ be the vertical subbundle and $\mathcal{D}\subset TM$
an arbitrary subbundle which is complimentary to
$\operatorname{Ver}$. Then, for every vector field
$u\in\mathfrak{X}(\mathcal{O})$ there exists a unique
$e\in\operatorname{Sec}(\mathcal{D})$ descending to $u$,
$d\rho\circ$ $e=u\circ\rho$. It follows that $[\Upsilon
,e]=b\Upsilon$, where $b\in C^{\infty}(M)$ with $\langle
b\rangle=0$. Defining $\operatorname{hor}(u):=e-\mathcal{S}(b)e$ ,
by property (\ref{P1}), we get that
$[\Upsilon,\operatorname{hor}(u)]=0$ . Therefore, we have the
splitting $TM=\operatorname{Hor}\oplus\operatorname{Ver}$ (a
principle connection on $M$),  where the horizontal subbundle
$\operatorname{Hor}=\operatorname*{Span}\{\operatorname{hor}(u)\mid
u\in\mathfrak{X}(\mathcal{O})\}$ is invariant with respect to the
$\mathbb{S}^{1}$-action ( for more details, see \cite{MaMoRa-90}).
According to this splitting, the vector field $\bar{W}$ in
(\ref{ED7}) has the decomposition $\bar{W}=\bar
{W}^{\operatorname{hor}}+\bar{W}^{\operatorname{ver}}$ into
horizontal and vertical parts. The following statement shows that
under an appropriate choice of
$Y\in\mathfrak{X}_{\operatorname*{inv}}(M)$, we can get $\bar
{W}^{\operatorname{ver}}=0$.

\begin{proposition}\label{nocrit}
If
\begin{equation}
d\omega\neq0\text{ on }M, \label{Nond}
\end{equation}
then one can choose an $\mathbb{S}^{1}$-invariant vector field $Y$
(\ref{Vert3}) in a such way that the near identity transformation
$\Phi_{\varepsilon}$ (\ref{NIT}) brings the perturbed vector field
$P_{\varepsilon}=X+\varepsilon W$ into the form $\tilde{P}_{\varepsilon
}=(\Phi_{\varepsilon})^{\ast}{}P_{\varepsilon}=\tilde{P}_{\varepsilon
}^{\operatorname{hor}}+\tilde{P}_{\varepsilon}^{\operatorname{ver}}$ with
\begin{equation}
\tilde{P}_{\varepsilon}^{\operatorname{ver}}=X+O(\varepsilon^{2}),
\label{Ave1}
\end{equation}
\begin{equation}
\tilde{P}_{\varepsilon}^{\operatorname{hor}}=\varepsilon\operatorname{hor}
(w)+O(\varepsilon^{2}), \label{Ave2}
\end{equation}
where $w\in\mathfrak{X}(\mathcal{O})$ is a unique vector field such
that $d\rho\circ\langle W\rangle=w\circ\rho$.
\end{proposition}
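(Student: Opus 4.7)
The plan is to invoke Proposition \ref{propnorm} and choose the free $\mathbb{S}^{1}$-invariant vector field $Y$ in (\ref{ED1}) so as to cancel the vertical component of the averaged perturbation $\bar{W}$ given by (\ref{ED7}). First, I would decompose $\langle W\rangle$ according to the connection splitting $TM=\operatorname{Hor}\oplus\operatorname{Ver}$. Since $d\rho\circ\langle W\rangle=w\circ\rho$ by definition of $w$, its horizontal part equals $\operatorname{hor}(w)$, so
\begin{equation*}
\langle W\rangle=\operatorname{hor}(w)+\frac{\psi}{\omega}X
\end{equation*}
for some coefficient $\psi\in C^{\infty}(M)$ of the vertical part. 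Because both $\langle W\rangle$ and the subbundle $\operatorname{Hor}$ are $\mathbb{S}^{1}$-invariant, $\psi$ is $\mathbb{S}^{1}$-invariant and hence descends to a function $\psi_{\mathcal{O}}\in C^{\infty}(\mathcal{O})$.

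Substituting into (\ref{ED7}) and using $\mathcal{L}_{Y}(\ln\omega)=\mathcal{L}_{Y}\omega/\omega$, I obtain
\begin{equation*}
\bar{W}=\operatorname{hor}(w)+\frac{\psi+\mathcal{L}_{Y}\omega}{\omega}X.
\end{equation*}
The first term is horizontal and the second is vertical (since $X=\omega\Upsilon$), so Proposition \ref{propnorm} yields at once $\tilde{P}_{\varepsilon}^{\operatorname{hor}}=\varepsilon\operatorname{hor}(w)+O(\varepsilon^{2})$ and $\tilde{P}_{\varepsilon}^{\operatorname{ver}}=X+\varepsilon\,\tfrac{\psi+\mathcal{L}_{Y}\omega}{\omega}X+O(\varepsilon^{2})$. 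Thus the desired conclusions (\ref{Ave1}) and (\ref{Ave2}) reduce to finding an $\mathbb{S}^{1}$-invariant vector field $Y$ satisfying the scalar equation
\begin{equation*}
\mathcal{L}_{Y}\omega=-\psi.
\end{equation*}

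The main obstacle is the global construction of such a $Y$. Because $\omega=\omega_{\mathcal{O}}\circ\rho$ and $\psi$ descends, the equation drops to $\mathcal{L}_{y}\omega_{\mathcal{O}}=-\psi_{\mathcal{O}}$ on the base $\mathcal{O}$, and the hypothesis (\ref{Nond}) is precisely $d\omega_{\mathcal{O}}\neq 0$ everywhere on $\mathcal{O}$. At each $m\in\mathcal{O}$ the linear functional $d_{m}\omega_{\mathcal{O}}:T_{m}\mathcal{O}\to\mathbb{R}$ is surjective, so the pointwise equation $d_{m}\omega_{\mathcal{O}}(y_{m})=-\psi_{\mathcal{O}}(m)$ is solvable on a neighborhood of every point, and a partition-of-unity argument produces a global smooth $y\in\mathfrak{X}(\mathcal{O})$. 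Taking $Y:=\operatorname{hor}(y)$ gives an $\mathbb{S}^{1}$-invariant vector field on $M$ with $\mathcal{L}_{Y}\omega=y(\omega_{\mathcal{O}})\circ\rho=-\psi$, which via the reduction above delivers (\ref{Ave1}) and (\ref{Ave2}).
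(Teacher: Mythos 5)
Your proof is correct and follows essentially the same route as the paper: both reduce the claim to solving the scalar equation $\mathcal{L}_{Y}\omega=-\langle c_{0}\rangle$ (your $-\psi$, the vertical coefficient of $\langle W\rangle$) for an $\mathbb{S}^{1}$-invariant $Y$, which hypothesis (\ref{Nond}) makes solvable. The only cosmetic difference is that the paper exhibits an explicit solution $Y=-\frac{\langle c_{0}\rangle}{a^{2}}\sum_{i}a_{i}\operatorname{hor}(u_{i})$ in the parallelizable case and then invokes a partition of unity on $M$, whereas you descend the equation to $\mathcal{O}$, solve it there locally, patch with a partition of unity on the base, and lift horizontally.
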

\begin{proof}
First, let us assume that $\mathcal{O}$ is parallelizable and pick a
basis of global vector fields $u_{1},...,u_{n}$ on $\mathcal{O}$.
Then, we have the basis of global $\mathbb{S}^{1}$-invariant vector
fields $\Upsilon,$
$\operatorname{hor}(u_{1}),...,\operatorname{hor}(u_{n})$ on $M$.
For the perturbation vector field $W$, we have the decomposition
$W=W^{\operatorname{hor}}+W^{\operatorname{ver}}$, where
$W^{\operatorname{hor}}=\sum_{i=1}^{n}c_{i}\operatorname{hor}(u_{i})$
and $W^{\operatorname{ver}}=c_{0}\Upsilon$ for some $c_{i}\in
C^{\infty}(M)$. Then, its $\mathbb{S}^{1}$-average is given by
\[
\langle W\rangle=\sum_{i=1}^{n}\langle
c_{i}\rangle\operatorname{hor}(u_{i})+\langle c_{0}\rangle\Upsilon
\]
It follows that the condition $\bar{W}^{\operatorname{ver}}=0$ is
equivalent to the algebraic equation $\mathbf{i}_{Y}d\omega=-\langle
c_{0}\rangle$ for $Y\in \mathfrak{X}_{\operatorname*{inv}}(M)$.
Under assumption (\ref{Nond}), a solution to this equation is
\begin{equation}
Y=-\frac{\langle c_{0}\rangle}{a^{2}}\sum_{i=1}^{n}a_{i}\operatorname{hor}(u_{i}%
),\label{Vert3}
\end{equation}
where $a_{i}=\mathbf{i}_{\operatorname{hor}(u_{i})}d\omega$ are $\mathbb{S}%
^{1}$-invariant functions on $M$ and $a^{2}=\sum_{i=1}^{n}a_{i}^{2}$. In the
general case, the statement follows from the partition of unity argument.
\end{proof}

Remark that in terms of the averaged vector field $w$ the
normalization condition (\ref{ED3}) reads
$\mathcal{L}_{w}\omega_{\mathcal{O}}=0$ on $\mathcal{O}$. In this
case, $[X,\operatorname{hor}(w)]=0$.

\textbf{The Hamiltonian case}. Let us show that, in the case when
the perturbed vector field is Hamiltonian, the normalization
condition (\ref{ED3}) is satisfied. Let $(M,\Omega)$ be a symplectic
manifold. Suppose that a perturbed vector field
$P_{\varepsilon}=X+\varepsilon W$ is Hamiltonian relative to the
symplectic structure $\Omega$, $\mathbf{i}_{X}\Omega=-dH_{0}$ and
$\mathbf{i}_{W}\Omega=-dH_{1}$ for some $H_{0},H_{1}\in
C^{\infty}(M)$. Assume that the flow of $X$ is periodic with
frequency function $\omega$ and
the corresponding $\mathbb{S}^{1}$-action is free. In particular $dH_{0}%
\neq0$ on $M$. Then, according to the period-energy relation for
Hamiltonian systems \cite{Gor-69}, \cite{Cush-84}, we have the
identity
\begin{equation}
dH_{0}\wedge d\omega=0\label{Vert4}%
\end{equation}
saying that $\omega$ functionally depends only on $H_{0}$. It
follows from (\ref{Vert4}) that the $\mathbb{S}^{1}$-action
preserves the symplectic form
and hence the averaged vector field $\langle W\rangle$ is also Hamiltonian, $\mathbf{i}%
_{\langle W\rangle}\Omega=-d\langle H_{1}\rangle$. Then,
\[
\mathbf{i}_{\langle W\rangle}dH_{0}=\Omega(\langle W\rangle,X)=-\mathbf{i}_{X}d\langle H_{1}\rangle=-\mathcal{L}%
_{X}\langle H_{1}\rangle=0.
\]
Finally, from here and (\ref{Vert4}) we get the equality
\[
0=(\mathbf{i}_{\langle W\rangle}dH_{0})d\omega-(\mathbf{i}_{\langle
W\rangle}d\omega)dH_{0}
=-(\mathcal{L}_{\langle W\rangle}\omega)dH_{0}%
\]
which implies (\ref{ED3}). Moreover, one can show that formula
(\ref{ED1}) for $Y=0$ gives a vector field $Z$ which is Hamiltonian
relative to $\Omega$ and the function
$\mathcal{S}(\frac{H_{1}}{\omega})$. Therefore, in the Hamiltonian
case, Proposition \ref{propnorm}  leads to the well-known result
\cite{Cush-84} on the global normalization of perturbed Hamiltonian
dynamics relative to periodic flows.

\section{The averaging on slow-fast phase spaces}

Let $M=M_{1}\times M_{2}$ be a product of two symplectic manifolds
$(M_{1},\sigma_{1})$ and $(M_{2},\sigma_{2})$. Let
$\pi_{1}:M\rightarrow M_{1}$ and $\pi_{2}:M\rightarrow M_{2}$ be the
canonical projections and $d_{1}$ and $d_{2}$ the partial exterior
derivatives on $M$ along $M_{1}$ and $M_{2}$, respectively. It is
clear that $d=d_{1}+d_{2}$ is the exterior derivative on $M$ and
$d_{1}^{2}=d_{2}^{2}=d_{1}\circ d_{2}+d_{2}\circ d_{1}=0$. Introduce
the following $\varepsilon$-dependent 2-form on $M$:
\begin{equation}
\sigma=\pi_{1}^{\ast}\sigma_{1}+\varepsilon\pi_{2}^{\ast}\sigma_{2}\label{PSF1}
\end{equation}
which is a symplectic structure for all $\varepsilon\neq 0$. For
$H\in C^{\infty}(M)$, denote by $V_{H}$ the Hamiltonian vector field
relative to $\sigma$. Then,
$V_H=V_H^{(1)}+\frac{1}{\varepsilon}V_H^{(2)}$, where $V_H^{(1)}$
and $V_H^{(2)}$ are vector fields on $M$ uniquely defined by the
relations
\begin{equation}
\mathbf{i}_{V_H^{(1)}}(\pi_{1}^{\ast}\sigma_{1})=-d_{1}H,\label{PSF2}%
\end{equation}
\begin{equation}
d\pi_{2}\circ V_H^{(1)}=0\label{PSF3}
\end{equation}
and
\begin{equation}
\mathbf{i}_{V_H^{(2)}}(\pi_{2}^{\ast}\sigma_{2})=-d_{2}H,\label{PSF4}
\end{equation}%
\begin{equation}
d\pi_{1}\circ V_H^{(2)}=0.\label{PSF5}
\end{equation}
It follows that, for all $m_{1}\in M_{1}$ and $m_{2}\in M_{2}$, the
vector fields $V_H^{(1)}$ and $V_H^{(2)}$ are tangent to the
symplectic slices $M_{1}\times\{m_{2}\}$ and $\{m_{1}\}\times
M_{2}$, respectively. For every $u\in\mathfrak{X}(M_{1})$, denote by
$\hat{u}=u\oplus0\in\mathfrak{X}(M)$ the lifting associated to the
canonical decomposition $TM=TM_{1}\oplus TM_{2}$.

On the slow-fast phase space $(M,\sigma)$, let us consider the
following perturbed Hamiltonian model
\cite{DaVo-08,Da-Vo-09,Vor-11,Vor-05}
\begin{equation}
H_{\varepsilon}=f\circ\pi_{1}+\varepsilon F, \label{HamF}%
\end{equation}
for some $f\in C^{\infty}(M_{1})$ and $F\in C^{\infty}(M)$. The
corresponding Hamiltonian vector field takes the form
\begin{equation}
V_{H_{\varepsilon}}=\mathbb{V}+\varepsilon\mathbb{W}, \label{Vec1}
\end{equation}
where
\begin{equation}
\mathbb{V}=\hat{v}_{f}+V_F^{(2)} \label{Vec2}%
\end{equation}
and
\begin{equation}
\mathbb{W}=V_F^{(1)} \label{Vec3}%
\end{equation}
are unperturbed and perturbation vector fields, respectively. Here
$v_{f}$ denotes the Hamiltonian vector field on
$(M_{1},\sigma_{1})$ of $f$.

Remark that, in general, the unperturbed vector field $\mathbb{V}$
is not Hamiltonian relative to the symplectic structure
(\ref{PSF1}). Indeed, it is easy to show that this happens only if
$F=\pi_{1}^{\ast}f_{1}+\pi_{2}^{\ast
}f_{2}$, for some $f_{1}\in C^{\infty}(M_{1})$ and $f_{2}\in C^{\infty}%
(M_{2})$. This feature of our unperturbed system comes from the singular
dependence of the symplectic form $\sigma$ on the perturbation parameter at
$\varepsilon=0$.

The vector field $\mathbb{V}$ is $\pi_{1}$-related with $v_{f}$ and hence the
trajectories of $\mathbb{V}$ are projected onto trajectories of the
Hamiltonian vector field $v_{f}$ , $\pi_{1}\circ\operatorname{Fl}%
_{\mathbb{V}}^{t}=\varphi^{t}\circ\pi_{_{1}}$. Here, $\varphi^{t}$ denotes the
flow of $v_{f}$. Therefore, $\operatorname{Fl}_{\mathbb{V}}^{t}$ is the
skew-product flow,
\[
\operatorname{Fl}_{\mathbb{V}}^{t}(m_{1},m_{2})=(\varphi^{t}(m_{1}%
),\mathcal{G}_{m_{1}}^{t}(m_{2})),
\]
where $\mathcal{G}_{m_{1}}^{t}$ is a smooth family of symplectomorphisms on
$(M_{2},\sigma_{2})$ determining as the solution of the time-dependent
Hamiltonian system
\[
\frac{d\mathcal{G}_{m_{1}}^{t}(m_{2})}{dt}=V_{F}^{(2)}(\varphi^{t}%
(m_{1}),\mathcal{G}_{m_{1}}^{t}(m_{2})),
\]
\[
\mathcal{G}_{m_{1}}^{0}=\operatorname{id}_{M_{2}}.
\]
Assume that the flow $\operatorname{Fl}_{\mathbb{V}}^{t}$ is periodic with
frequency function $\omega=\frac{2\pi}{T}$. Then,
\begin{equation}
\varphi^{t+T(m_{1},m_{2})}(m_{1})=\varphi^{t}(m_{1})\label{PET1}%
\end{equation}
for all $m_{1}\in M_{1},m_{2}\in M_{2}$ and $t\in\mathbb{R}$. If $v_{f}\neq0$
on $M_{1}$, then differentiating equality (\ref{PET1}) along $M_{2}$ says
that the period function $T$ is independent of $m_{2}$ and hence
$\omega=\varpi\circ\pi_{1}$, for a certain smooth positive function $\varpi$
on $M_{1}$. Therefore, the Hamiltonian flow $\varphi^{t}$ of $v_{f}$ is also
periodic with frequency function $\varpi$.

\begin{theorem}\label{hamperid}
Let $V_{H_{\varepsilon}}=\mathbb{V}+\varepsilon\mathbb{W}$ be the
Hamiltonian vector field on $(M,\sigma)$, where the flow of the
unperturbed vector field $\mathbb{V}$ (\ref{Vec2}) is periodic with
frequency function $\omega$. Assume that the regular set
$\operatorname{Reg}(v_{f})$ is everywhere dense in $M_{1}$ and the
$\mathbb{S}^{1}$-action associated to the Hamiltonian flow $\varphi^{t}$ of
$v_{f}$ is free on $\operatorname{Reg}(v_{f})$. Then, the perturbation
vector field $\mathbb{W}$ (\ref{Vec3}) satisfies the normalization condition
\begin{equation}
\mathcal{L}_{\langle \mathbb{W\rangle}}\omega=0\text{ on }M\label{PET5}%
\end{equation}
and hence by Proposition \ref{propnorm}, $V_{H_{\varepsilon}}$ admits the
normalization of first order with respect to $\mathbb{V}$.
\end{theorem}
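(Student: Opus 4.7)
The plan is to reduce the normalization condition (\ref{PET5}) to an averaging identity and then exploit the slice-wise Hamiltonian structure together with the period--energy relation on $(M_1,\sigma_1)$. Since $\omega=\varpi\circ\pi_1$ is a first integral of $\mathbb{V}$ and hence $\mathbb{S}^{1}$-invariant, substituting $(\operatorname{Fl}_{\Upsilon}^{t})^{\ast}\omega=\omega$ into the definition (\ref{Av}) of the averaging operator (applied to the vector field $\mathbb{W}$) produces the identity
\begin{equation*}
\mathcal{L}_{\langle\mathbb{W}\rangle}\omega=\langle\mathcal{L}_{\mathbb{W}}\omega\rangle,
\end{equation*}
so it suffices to prove that $\langle\mathcal{L}_{\mathbb{W}}\omega\rangle=0$ on $M$.

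Next, I compute $\mathcal{L}_{\mathbb{W}}\omega$ pointwise. Since $\mathbb{W}=V_{F}^{(1)}$ is tangent to the slice $M_1\times\{m_2\}$ and restricts there to the Hamiltonian vector field on $(M_1,\sigma_1)$ of the function $m_1\mapsto F(m_1,m_2)$, one has $\mathcal{L}_{\mathbb{W}}\omega\big|_{(m_1,m_2)}=v_{F(\cdot,m_2)}(\varpi)(m_1)$. Because the flow $\varphi^{t}$ of $v_f$ on $M_1$ is periodic with frequency $\varpi$, the period--energy relation (cf.\ (\ref{Vert4})) applied on $(M_1,\sigma_1)$ gives $df\wedge d\varpi=0$ there; on the dense open set $\operatorname{Reg}(v_f)$ this yields $d\varpi=\lambda\,df$ with $\lambda$ smooth, and (since $\varpi$ is locally a function of $f$) $\lambda$ is itself a first integral of $v_f$. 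Substituting and using $\mathbf{i}_{v_{f}}d_{1}F=\hat v_{f}(F)$ produces
\begin{equation*}
\mathcal{L}_{\mathbb{W}}\omega=\lambda\cdot\hat v_{f}(F)=\lambda\cdot\mathcal{L}_{\mathbb{V}}F
\end{equation*}
on $\operatorname{Reg}(v_f)\times M_2$, the last equality using $\mathcal{L}_{V_{F}^{(2)}}F=0$, which follows from $d\pi_{1}\circ V_{F}^{(2)}=0$ (killing the $d_1F$-contribution) together with the antisymmetry $\pi_{2}^{\ast}\sigma_{2}(V_{F}^{(2)},V_{F}^{(2)})=0$ (killing the $d_2F$-contribution).

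To finish, observe that $\lambda$ is pulled back from $M_1$ and is a first integral of $v_f$, hence of $\mathbb{V}=\hat v_f+V_F^{(2)}$ (since $V_F^{(2)}$ annihilates any function independent of $m_2$), so $\lambda$ is $\mathbb{S}^{1}$-invariant and factors out of the average. Writing $\mathbb{V}=\omega\Upsilon$ with $\omega$ also invariant, property (\ref{P3}) gives $\langle\mathcal{L}_{\mathbb{V}}F\rangle=\omega\langle\mathcal{L}_{\Upsilon}F\rangle=0$, whence
\begin{equation*}
\mathcal{L}_{\langle\mathbb{W}\rangle}\omega=\lambda\langle\mathcal{L}_{\mathbb{V}}F\rangle=0\quad\text{on }\operatorname{Reg}(v_f)\times M_2,
\end{equation*}
and by density of $\operatorname{Reg}(v_f)$ in $M_1$ together with continuity the identity propagates to all of $M$, establishing (\ref{PET5}); the normalization claim then follows from Proposition \ref{propnorm}. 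The main delicate step is the extraction of the smooth multiplier $\lambda$ from $df\wedge d\varpi=0$ and the verification that $\lambda$ is a first integral of $v_f$; once this is in place, the remaining manipulations (the slice computation, $V_F^{(2)}(F)=0$, and factoring $\lambda$ outside the average) are routine consequences of the definitions and the properties of $\mathcal{A}$ collected in Section~2.
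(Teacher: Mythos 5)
Your proof is correct and follows essentially the same route as the paper's: the period--energy relation $d\varpi\wedge df=0$ on $\operatorname{Reg}(v_f)$, the skew-symmetry of $\sigma_1$ giving $\mathcal{L}_{\mathbb{W}}(f\circ\pi_1)=-\mathcal{L}_{\mathbb{V}}F$ (with $\mathcal{L}_{V_F^{(2)}}F=0$), the vanishing of $\langle\mathcal{L}_{\mathbb{V}}F\rangle$ via $\mathbb{V}=\omega\Upsilon$ and (\ref{P3}), and the density argument. The only cosmetic difference is that you extract the invariant multiplier $\lambda$ from $d\varpi=\lambda\,df$ pointwise before averaging, whereas the paper first shows $\mathcal{L}_{\langle\mathbb{W}\rangle}(f\circ\pi_1)=0$ and then contracts the identity $d\omega\wedge d(f\circ\pi_1)=0$ with $\langle\mathbb{W}\rangle$; these are equivalent since $\lambda$ is $\mathbb{S}^1$-invariant.
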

\begin{proof}
It is sufficient to show that (\ref{PET5}) holds on the open domain $\pi
_{1}^{-1}(\operatorname{Reg}(v_{f}))$ which is everywhere dense in $M$. The
period-energy relation for $v_{f}$ says that $d\varpi\wedge df=0$ on
$\operatorname{Reg}(v_{f})$ and hence \
\begin{equation}
d\omega\wedge d(f\circ\pi_{1})=0\label{IDPE}%
\end{equation}
on $\pi_{1}^{-1}(\operatorname{Reg}(v_{f}))$. The hypotheses of the theorem
imply that $d(f\circ\pi_{1})\neq0$. On the other hand, taking into account
(\ref{PSF2}), (\ref{PSF4}), (\ref{Vec3}) and the identity $\hat{v}_{f}=V_{f\circ\pi_{1}}^{(1)}$, we get
\begin{align*}
\mathcal{L}_{\mathbb{V}}F &  =\mathcal{L}_{\hat{v}_{f}}F=\pi_{1}^{\ast}%
\sigma_{1}(V_{f\circ\pi_{1}}^{(1)},V_{F}^{(1)})\\
&  =-\pi_{1}^{\ast}\sigma_{1}(V_{F}^{(1)},V_{f\circ\pi_{1}}%
^{(1)})=-\mathcal{L}_{\mathbb{W}}(f\circ\pi_{1}).
\end{align*}
It follows from here that
\begin{align*}
\mathcal{L}_{\langle \mathbb{W}\rangle}(f\circ\pi_{1}) &  =\langle \mathcal{L}_{\mathbb{W}}%
(f\circ\pi_{1})\rangle=-\langle \mathcal{L}_{\mathbb{V}}F\rangle\\
&  =-\mathcal{L}\langle _{\mathbb{V\rangle}}F=0.
\end{align*}
Finally, using these relations and applying the interior product
with $\langle W\rangle$ to both sides of (\ref{IDPE}), we get the
equality
\[
0=(\mathbf{i}_{\langle
W\rangle}d\omega)d(f\circ\pi_{1})-(\mathbf{i}_{\langle
W\rangle}d(f\circ\pi _{1}))d\omega=-(\mathcal{L}_{\langle
W\rangle}\omega)d(f\circ\pi_{1}).
\]
which implies (\ref{PET5}).
\end{proof}

\begin{remark}
In the situation when $v_{f}\equiv0$, we get a Hamiltonian model
which appears in the theory of adiabatic approximation
\cite{ArKN-88}, \cite{Ne-08}. In this case, the periodicity of the
flow of $\mathbb{V}=V_{F}^{(2)}$ does not imply that the
perturbation vector field $\mathbb{W}=V_{F}^{(1)}$
satisfies (\ref{PET5}). The period-energy relation for the
restriction of $V_{F}^{(2)}$ to the symplectic slices
$\{m_{1}\}\times M_{2}$ implies only that $d_{2}\omega\wedge
d_{2}F=0$.
\end{remark}

\textbf{Periodicity conditions}. The periodicity of the flow of
vector field $\mathbb{V}$ (\ref{Vec2}) can be formulated as a
resonance relation. Suppose that the flow $\varphi^{t}$ of $v_{f}$
satisfies all hypotheses of Theorem \ref{hamperid}. Choose a
frequency function $\varpi$ $:M_{1}\rightarrow\mathbb{R}$ of
$\varphi^{t}$ in a such a way that the orbit through every point $m_{1}%
\in\operatorname*{Reg}(v_{f})$ is $\tau(m_{1})$-minimally periodic.
Here $\tau(m_{1})=\frac{2\pi}{\varpi(m_{1})}$. It follows that
$\mathbb{V}$ is complete and the group property of
$\operatorname{Fl}_{\mathbb{V}}^{t}$ implies the relations
\begin{equation}
\mathcal{G}_{m_{1}}^{t_{1}+t_{2}}=\mathcal{G}_{\varphi^{t_{2}}(m_{1})}^{t_{1}%
}\circ\mathcal{G}_{m_{1}}^{t_{2}}\label{Vec4}%
\end{equation}
for any $m_{1}\in M_{1}$ and $m_{2}\in M_{2}$. Define the
\textit{monodromy} of the flow $\operatorname{Fl}_{\mathbb{V}}^{t}$
over a point $m_{1}\in M_{1}$
as a symplectomorphism $g_{m_{1}}:M_{2}\rightarrow M_{2}$ given by $g_{m_{1}%
}:=\mathcal{G}_{m_{1}}^{\tau(m_{1})}$. Then, property (\ref{Vec4})
implies the identity
$\mathcal{G}_{m_{1}}^{t+\tau(m_{1})}=\mathcal{G}_{m_{1}}^{t}\circ
g_{m_{1}}$. It follows that the flow \
$\operatorname{Fl}_{\mathbb{V}}^{t}$ is periodic if and only if
there exists an integer $k\geq1$ such that
\begin{equation}
g_{m_{1}}^{k}=\operatorname*{id}\text{ }\forall m_{1}\in
M_{1}.\label{PET2}
\end{equation}
In this case, the corresponding frequency function can be defined as
$\omega=\frac{1}{k}\varpi\circ\pi_{1}$.

An important class of perturbed dynamics on slow-fast spaces comes
from the linearization procedure for Hamiltonian systems around
invariant symplectic submanifolds \cite{KarVor-98}. In this case,
the unperturbed term $\mathbb{V}$ is a linear vector field on a
symplectic vector bundle which represents the normal linearized
dynamics. The verification of  condition (\ref{PET2}) is related to
computing the monodromy of a time-periodic linear Hamiltonian
system. For several Hamiltonian models with two degree of freedom,
this problem was studied in \cite{Yosh-84}, \cite{Yosh--01}.

\begin{example}
On the phase space $(M=R_{0}^{2}\times R^{2},$ $\sigma=dp_{1}\wedge
dq_{1}+\varepsilon dp_{2}\wedge dq_{2})$, consider the following perturbed
Hamiltonian
\begin{equation}
H_{\varepsilon}=\frac{1}{2}p_{1}^{2}+\frac{1}{4}q_{1}^{4}+\frac{\varepsilon
}{2}(p_{2}^{2}+\frac{\delta}{2}q_{1}^{2}q_{2}^{2})\label{Mod1}%
\end{equation}
where $\varepsilon\ll1$ is the perturbation parameter and $\delta\in
\mathbb{R}$. The corresponding Hamiltonian vector field has the
representation (\ref{Vec1}), where the unperturbed and perturbed
parts are of the form
\begin{equation}
\mathbb{V}=p_{1}\frac{\partial}{\partial q_{1}}-q_{1}^{3}\frac{\partial
}{\partial p_{1}}+p_{2}\frac{\partial}{\partial q_{2}}-\delta q_{1}^{2}%
q_{2}\frac{\partial}{\partial p_{2}},\label{Mod2}%
\end{equation}%
\begin{equation}
\mathbb{W}=-\delta q_{1}q_{2}^{2}\frac{\partial}{\partial p_{1}}.\label{Mod3}%
\end{equation}
Using the results in \cite{Yosh-84}, one can show that the flow of the
vector field $\mathbb{V}$ is periodic (condition (\ref{PET2}) ) if and only if
the parameter $\delta$ satisfies the relation
\[
\sqrt{2}\cos\left(  \frac{\pi}{4}\sqrt{1+8\delta}\right)  =\cos\left(
\pi\frac{n}{k}\right)
\]
for arbitrary coprime integers $n,k\in\mathbb{Z}$ such that
$0 < n < k$. Solutions of this equation are representated
in the form $\delta=\frac{r(r+1)}{2}$, where $r$ runs over the
subset $\displaystyle
\Delta=\mathbb{Q}\cap\bigcup_{s=0}^{\infty}(4s,4s+1).
$ The frequency function is given by $\omega=\frac{1}{c}(2p_{1}^{2}+q_{1}%
^{4})^{\frac{1}{4}}$, where $c=\frac{\sqrt{2}}{\pi}\int_{0}^{1}\frac
{dz}{\sqrt{1-z^{4}}}$. Therefore, for every $r\in\Delta$, vector
fields (\ref{Mod2}), (\ref{Mod3}) satisfy the hypotheses of Theorem
\ref{hamperid}  and hence Hamiltonian system (\ref{Mod1}) admits a
normalization of first order. Notice that this system is
non-integrable for all $\varepsilon \neq0$ and $r\in(4s,4s+1),$
$s\in\mathbb{Z}_{+}$ \cite{Koz-96,Yosh--01}.
\end{example}

\textbf{Hamiltonian Structures}. As we have mentioned above the
vector field $\mathbb{V}$ (\ref{Vec2}) does not inherit any natural
Hamiltonian structure from $V_{H_\varepsilon}$. Here, we formulate a
criterion for the existence of Hamiltonian structure for
$\mathbb{V}$ which is based on Corollary \ref{corolker} and the
results in \cite{DaVo-08,Vor-05}.

Let $\Omega_{\operatorname*{hor}}^{1}(M)$ be the subspace of horizontal
1-forms $\eta$ on $M$ with respect to the projection $\pi_{1}:M\rightarrow
M_{1}$. A 1-form $\eta$ belongs to $\Omega_{\operatorname*{hor}}^{1}(M)$ if
$\mathbf{i}_{Z}\eta=0$ for every vector field $Z$ on $M$ such that $d\pi
_{1}\circ Z=0$. In particular, $d_{1}F\in\Omega_{\operatorname*{hor}}%
^{1}(M)$. Since the infinitesimal generator $\Upsilon=\frac{1}{\omega
}\mathbb{V}$ of the $\mathbb{S}^{1}$-action is $\pi_{1}$-related with
$\frac{2\pi}{\tau}v_{f}$ , the pull-back by the flow $\operatorname{Fl}%
_{\mathbb{V}}^{t}$ leaves invariant the subspace $\Omega_{\operatorname*{hor}%
}^{1}(M)$. It follows that $\Omega_{\operatorname*{hor}}^{1}(M)$ is also an
invariant subspace for operators $\mathcal{A},$ $\mathcal{L}_{\Upsilon}$ and
$\mathcal{S}$.

\begin{theorem}\label{flperiod}
Suppose that the flow of the vector field
$\mathbb{V}=\hat{v}_{f}+V_{F}^{(2)}$ is periodic with frequency
function $\omega$ and there exists a $\mathbb{S}^{1}$-invariant
horizontal 1-form $\mu\in\Omega _{\operatorname*{hor}}^{1}(M)$ such
that
\begin{equation}
\langle d_{1}F\rangle=(\mathbf{i}_{\hat{v}_{f}}\mu)d_{1}\omega. \label{ASP}%
\end{equation}
Then, for a given open domain $N\subset M$ with compact closure and small
enough $\varepsilon\neq0$, the vector field $\mathbb{V}$ is Hamiltonian
relative to the symplectic structure
\begin{equation}
\tilde{\sigma}=\pi_{1}^{\ast}\sigma_{1}+\varepsilon\pi_{2}^{\ast}\sigma
_{2}-\varepsilon d\theta, \label{PET6}%
\end{equation}
and the function
\begin{equation}
\tilde{H}_{\varepsilon}=f\circ\pi+\varepsilon(F-\mathbf{i}_{\hat{v}_{f}}%
\theta). \label{PET7}%
\end{equation}
Here $\theta\in\Omega_{\operatorname*{hor}}^{1}(M)$ is a horizontal 1-form on
$M$ given by
\begin{equation}
\theta=\frac{1}{\omega}\mathcal{S}(d_{1}F)-\frac{1}{\omega^{3}}\mathcal{S}%
^{2}(\mathcal{L}_{\hat{v}_{f}}F)d_{1}\omega+\mu. \label{PET8}%
\end{equation}
Moreover, the functions $f\circ\pi$ and $F-\mathbf{i}_{\hat{v}_{f}}\theta$ are
Poisson commuting first integrals of $\mathbb{V}$.
\end{theorem}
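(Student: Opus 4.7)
The plan is to recognize condition (\ref{ASP}) as the solvability criterion of Corollary \ref{corolker} for the homological equation $\mathcal{L}_\mathbb{V}\theta = d_1 F$ in the subspace $\Omega^1_{\operatorname{hor}}(M)$ of horizontal 1-forms, and then to verify the Hamiltonian identity $\mathbf{i}_\mathbb{V}\tilde\sigma = -d\tilde H_\varepsilon$ by a direct Cartan-formula calculation. Two background observations are central: for any horizontal 1-form $\alpha$, one has $\mathbf{i}_\mathbb{V}\alpha = \mathbf{i}_{\hat v_f}\alpha$ (since $V_F^{(2)}$ is $\pi_1$-vertical and annihilates horizontal forms), and $d\omega = d_1\omega$ (since $\omega = \varpi\circ\pi_1$ depends only on $M_1$).

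First, I would apply Corollary \ref{corolker} with $X = \mathbb{V}$ and $\eta = d_1 F \in \Omega^1_{\operatorname{hor}}(M)$. Using the observations above, (\ref{ASP}) can be rewritten as $\langle d_1 F\rangle = d\omega\wedge\mathbf{i}_\mathbb{V}\mu$, which is exactly the solvability criterion (\ref{Fom12}); furthermore the explicit solution formula (\ref{Fom20}) specialises, via $\mathbf{i}_\mathbb{V}(d_1 F) = \mathbf{i}_{\hat v_f}(d_1 F) = \mathcal{L}_{\hat v_f}F$, to formula (\ref{PET8}). Since $\Omega^1_{\operatorname{hor}}(M)$ is preserved by $\mathcal{S}$ (stated just before the theorem), $\theta$ is horizontal, and it satisfies $\mathcal{L}_\mathbb{V}\theta = d_1 F$.

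Next, I would compute $\mathbf{i}_\mathbb{V}\tilde\sigma$ term by term. From (\ref{PSF2}) together with the identity $\hat v_f = V^{(1)}_{f\circ\pi_1}$ and the $\pi_1$-verticality of $V_F^{(2)}$, one gets $\mathbf{i}_\mathbb{V}(\pi_1^*\sigma_1) = -d(f\circ\pi_1)$. From (\ref{PSF4}) and $d\pi_2\circ\hat v_f = 0$, $\mathbf{i}_\mathbb{V}(\pi_2^*\sigma_2) = -d_2 F$. By Cartan's magic formula plus horizontality of $\theta$, $\mathbf{i}_\mathbb{V}(d\theta) = \mathcal{L}_\mathbb{V}\theta - d(\mathbf{i}_\mathbb{V}\theta) = d_1 F - d(\mathbf{i}_{\hat v_f}\theta)$. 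Summing and using $dF = d_1 F + d_2 F$, I obtain $\mathbf{i}_\mathbb{V}\tilde\sigma = -d(f\circ\pi_1) - \varepsilon\,dF + \varepsilon\,d(\mathbf{i}_{\hat v_f}\theta) = -d\tilde H_\varepsilon$, as required.

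To finish, I still have to check that $\tilde\sigma$ is actually symplectic and establish the first-integral statement. Closedness is immediate from $d(\pi_i^*\sigma_i) = 0$ and $d\circ d = 0$. For non-degeneracy, a block-matrix Schur-complement computation in local coordinates shows that the $M_2$-$M_2$ block of $\tilde\sigma$ equals $\varepsilon\sigma_2$ (non-degenerate) while the effective $M_1$-$M_1$ block tends to $\sigma_1$ as $\varepsilon\to 0$; on the compact closure $\bar N$ this gives a uniform $\delta>0$ for which $\tilde\sigma$ is non-degenerate on $N$. Finally, $\mathcal{L}_\mathbb{V}(f\circ\pi_1) = v_f(f) = 0$, so $f\circ\pi_1$ is a first integral of $\mathbb{V} = V_{\tilde H_\varepsilon}$ and therefore Poisson-commutes with $\tilde H_\varepsilon$ relative to $\tilde\sigma$; by linearity it also commutes with $F - \mathbf{i}_{\hat v_f}\theta = \varepsilon^{-1}(\tilde H_\varepsilon - f\circ\pi_1)$, which is automatically a first integral of $\mathbb{V}$ as well. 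The most delicate step is the matching of the explicit formula (\ref{PET8}) with (\ref{Fom20}) under the solvability hypothesis (ensuring the $\mu$-term represents a choice of the arbitrary $\mathbb{S}^1$-invariant term in (\ref{Fom20}) consistent with $\bar\eta = 0$); once this is handled, the Cartan computation, non-degeneracy, and Poisson-commutation are routine.
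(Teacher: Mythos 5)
Your proposal follows essentially the same route as the paper's own proof: the paper likewise reduces the Hamiltonian property of $\mathbb{V}$ to the homological equation $\mathcal{L}_{\mathbb{V}}\theta=d_{1}F$ for a horizontal $1$-form, verifies $\mathbf{i}_{\mathbb{V}}\tilde{\sigma}=-d\tilde{H}_{\varepsilon}$ by exactly the Cartan-type computation you describe (splitting $\mathbf{i}_{\mathbb{V}}$ over the three terms of $\tilde{\sigma}$ and using $\mathbf{i}_{V_{F}^{(2)}}\theta=0$), and then invokes Corollary \ref{corolker} to produce $\theta$; your added treatment of the non-degeneracy of $\tilde{\sigma}$ and of the Poisson-commuting first integrals supplies details the paper leaves implicit. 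The one step you defer --- matching (\ref{PET8}) with (\ref{Fom20}) so that $\bar{\eta}=0$ --- is where the only real subtlety sits: by (\ref{Fom19}) the invariant part is $\bar{\eta}=\langle d_{1}F\rangle-\tfrac{1}{\omega}(\mathbf{i}_{\mathbb{V}}\mu')\,d\omega$, where $\mu'$ denotes the additive invariant term in the solution formula, so under (\ref{ASP}) one must take $\mu'=\omega\mu$ rather than $\mu'=\mu$; as written, the pair (\ref{ASP})--(\ref{PET8}) is off by a factor of $\omega$ in the $\mu$-term, a slip that the paper's own proof also glosses over. This does not affect the existence assertion, since $\mu\mapsto\omega\mu$ is a bijection of $\mathbb{S}^{1}$-invariant horizontal $1$-forms, but it is precisely what your ``once this is handled'' clause leaves open, and carrying it out is the one place where your write-up, like the paper's, needs a correction rather than mere routine checking.
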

\begin{proof}
According to \cite{DaVo-08,Vor-05}, the vector field $\mathbb{V}$ in
(\ref{Vec2}) is Hamiltonian relative to 2-form (\ref{PET6}) and
function (\ref{PET7}) if the horizontal 1-form
$\theta\in\Omega_{\operatorname*{hor}}^{1}(M)$ satisfies the
homological equation
\begin{equation}
\mathcal{L}_{\mathbb{V}}\theta=d_{1}F.\label{PET9}%
\end{equation}
Indeed, taking into account that $\mathbf{i}_{V_{F}^{(2)}}\theta=0$,
we rewrite equation (\ref{PET9}) in the form
\begin{equation}
\mathcal{L}_{\hat{v}_{f}}\theta+\mathbf{i}_{V_{F}^{(2)}}d\theta
=d_{1}F.\label{PET10}%
\end{equation}
On other hand, we have
\[
\mathbf{i}_{\hat{v}_{f}}\tilde{\sigma}=-d(f\circ\pi_{1})+\varepsilon
(d\mathbf{i}_{\hat{v}_{f}}\theta-\mathcal{L}_{\hat{v}_{f}}\theta)
\]
and $\mathbf{i}_{V_{F}^{(2)}}\tilde{\sigma}=-\varepsilon
(d_{2}F+\mathbf{i}_{V_{F}^{(2)}}d\theta)$. It follows form these
relations and (\ref{PET10}) that
\begin{align*}
\mathbf{i}_{\mathbb{V}}\tilde{\sigma} &  =-d(f\circ\pi_{1}) +\varepsilon(d\mathbf{i}_{\hat{v}_{f}}\theta-\mathcal{L}_{\hat{v}_{f}%
}\theta-d_{2}F-\mathbf{i}_{V_{F\text{ }}^{(2)}}d\theta),\\
&  =-d(f\circ\pi_{1})+\varepsilon(d\mathbf{i}_{\hat{v}_{f}}\theta-d_{1}%
F-d_{2}F).
\end{align*}
Finally, under condition (\ref{ASP}), by Corollary \ref{corolker} a solution $\theta$ to
equation (\ref{PET9}) is given by formula (\ref{PET8}).
\end{proof}

It is clear that if $\langle d_{1}F\rangle=0$, then (\ref{ASP}) is
satisfied for $\mu=0$. But, in general, one can not omit $\mu$ in
condition (\ref{ASP}).

\begin{example}
Consider the phase space $(M=M_{1}\times M_{2},\sigma)$, in the case
when $M_{1}=\mathbb{R}\times \mathbb{S}^{1}$ is the 2-cylinder
equipped with standard symplectic form $\sigma_{1}=ds\wedge
d\varphi$. Let $v_{f}=\varpi(s)\frac{\partial}{\partial s}$ where
$f=f(s)$, and
$\varpi=\frac{\partial f}{\partial s}$. Consider a vector field $\mathbb{V}%
=\varpi(s)\frac{\partial}{\partial s}+V_{F}^{(2)}$, where $F\in
C^{\infty}(M)$ is a function which is independent of $\varphi$ and
$\frac{\partial F}{\partial s}\neq0$. Assuming that
$\frac{\partial\varpi}{\partial s}\neq0$, we get that $\langle
d_{1}F\rangle=d_{1}F\neq0$ and condition (\ref{ASP}) holds for
$\mu=\frac{\partial\varpi}{\partial s}^{-1}\frac{\partial
F}{\partial s}d\varphi$.
\end{example}

Finally, we remark that if $\langle d_{1}F\rangle\wedge
d\omega\neq0$, then condition (\ref{ASP}) does not hold.

\textbf{Invariant symplectic structures}. Consider again the phase space
$(M=M_{1}\times M_{2},\sigma=\pi_{1}^{\ast}\sigma_{1}+\varepsilon\pi
_{2}^{\ast}\sigma_{2})$ and suppose that we are given an $\mathbb{S}^{1}$-action on $M$ ( independent of $\varepsilon$) with infinitesimal generator
of the form $\Upsilon=\hat{v}_{h}+V_{J}^{(2)}$, for some $h\in
C^{\infty}(M_1)$ and $J\in C^{\infty}(M)$. Therefore, $\Upsilon$ is $\pi_{1}%
$-related with a Hamiltonian vector field $v_{h}$ on $(M_{1},\sigma_{1})$.
The $\mathbb{S}^{1}$-action on $M$ descends to a Hamiltonian $\mathbb{S}^{1}%
$-action on $M_{1}$ with infinitesimal generator $v_{h}$. Then, we
have the relations
\begin{equation}
\mathbf{i}_{\Upsilon}\sigma=-d_{1}(h\circ\pi_{1})-d_{2}J, \label{ISS}%
\end{equation}%
\[
\mathcal{L}_{\Upsilon}\sigma=d\mathbf{i}_{\Upsilon}\sigma=-\varepsilon
d_{1}d_{2}J
\]
which say that the symplectic form $\sigma$ is $\mathbb{S}^{1}%
$-invariant only in the case when $J=\pi_{1}^{\ast}j_{1}+\pi_{2}^{\ast}j_{2}$
for some $j_{1}\in C^{\infty}(M_{1})$ and $j_{2}\in C^{\infty}(M_{2})$.

\begin{theorem}
The $\mathbb{S}^{1}$-average of the symplectic form $\sigma$ has the
following representation
\begin{equation}
\langle \sigma\rangle=\sigma-\varepsilon d\beta,\label{SRep1}%
\end{equation}
where $\beta\in\Omega_{\operatorname*{hor}}^{1}(M)$ is the horizontal 1-form
given by
\begin{equation}
\beta=\mathcal{S}(d_{1}J)\equiv\frac{1}{2\pi}\int_{0}^{2\pi}(t-\pi
)(\operatorname{Fl}_{\Upsilon}^{t})^{\ast}d_{1}Jdt.\label{SRep2}%
\end{equation}
For every open domain $N\subset M$ with compact closure and small enough
$\varepsilon$, $\langle \sigma\rangle$ is symplectic form on $N$. The $\mathbb{S}%
^{1}$-action is Hamiltonian relative to $\langle \sigma_{M}\rangle$
if and only if
\begin{equation}
\langle d_{1}J\rangle=0.\label{Adb1}
\end{equation}
Under this condition, the corresponding momentum map is
\begin{equation}
\tilde{J}_{\varepsilon}=h\circ\pi_{1}+\varepsilon(J-\mathbf{i}_{\hat{v}_{h}%
}\beta).\label{Adb2}
\end{equation}
\end{theorem}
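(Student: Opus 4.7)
The plan is to apply the closed-form decomposition (\ref{Clos1}) to the symplectic form $\sigma$ and then compute $\mathbf{i}_\Upsilon\langle\sigma\rangle$ to identify the momentum map. First I would establish (\ref{SRep1}). Applying (\ref{Clos1}) to the closed 2-form $\sigma$ gives $\sigma = \langle\sigma\rangle + d(\mathbf{i}_\Upsilon\theta_0)$ with $\theta_0 = \mathcal{S}(\sigma)$. Since the flow of $\Upsilon$ fixes $\Upsilon$ itself, the interior product $\mathbf{i}_\Upsilon$ commutes with $\mathcal{S}$, so $\mathbf{i}_\Upsilon\theta_0 = \mathcal{S}(\mathbf{i}_\Upsilon\sigma)$. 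A direct computation from $\Upsilon = \hat v_h + V_J^{(2)}$ and (\ref{PSF2})--(\ref{PSF4}) gives $\mathbf{i}_\Upsilon\sigma = -d_1(h\circ\pi_1) - \varepsilon d_2 J$. Now $d_1(h\circ\pi_1) = d(h\circ\pi_1)$ is $\Upsilon$-invariant (since $\mathcal{L}_{v_h}h = 0$), hence killed by $\mathcal{S}$. Writing $d_2 J = dJ - d_1 J$ yields $\mathcal{S}(d_2 J) = d\mathcal{S}(J) - \beta$, and exterior differentiation removes the exact piece, leaving $\sigma - \langle\sigma\rangle = \varepsilon d\beta$.

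For the symplectic property on $N$, closedness of $\langle\sigma\rangle$ follows from the commutation of $d$ with $\mathcal{A}$. For non-degeneracy, the horizontality of $\beta$ implies that $d\beta$ has no purely-vertical component (no summand in $\pi_2^*\Omega^2(M_2)$), so in the expansion $\langle\sigma\rangle^n = (\sigma - \varepsilon d\beta)^n$ every contribution involving $d\beta$ enters at order $\varepsilon^{k_2 + 1}$ or higher (writing $2k_i = \dim M_i$), whereas $\sigma^n = \binom{n}{k_2}\varepsilon^{k_2}(\pi_1^*\sigma_1)^{k_1}\wedge(\pi_2^*\sigma_2)^{k_2}$ is $\varepsilon^{k_2}$ times a nowhere-vanishing volume form on $M$. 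Compactness of $\overline{N}$ then yields $\langle\sigma\rangle^n = \varepsilon^{k_2}(\text{nowhere-vanishing} + O(\varepsilon))$, uniformly nonzero for sufficiently small $\varepsilon\neq0$.

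For the momentum-map claim I compute $\mathbf{i}_\Upsilon\langle\sigma\rangle = \mathbf{i}_\Upsilon\sigma - \varepsilon\mathbf{i}_\Upsilon d\beta$. Cartan's formula combined with (\ref{P1}) applied to $\beta = \mathcal{S}(d_1 J)$ gives $\mathbf{i}_\Upsilon d\beta = \mathcal{L}_\Upsilon\beta - d\mathbf{i}_\Upsilon\beta = d_1 J - \langle d_1 J\rangle - d\mathbf{i}_{\hat v_h}\beta$, where horizontality of $\beta$ together with verticality of $V_J^{(2)}$ reduces $\mathbf{i}_\Upsilon\beta$ to $\mathbf{i}_{\hat v_h}\beta$. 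Substituting and recognizing $\varepsilon(d_1 J + d_2 J) = \varepsilon dJ$ as an exact total derivative yields
\[
\mathbf{i}_\Upsilon\langle\sigma\rangle = -d\tilde J_\varepsilon + \varepsilon\langle d_1 J\rangle,
\]
with $\tilde J_\varepsilon$ as in (\ref{Adb2}). Under (\ref{Adb1}) the right-hand side collapses to $-d\tilde J_\varepsilon$, identifying $\tilde J_\varepsilon$ as the momentum map. Conversely, Hamiltonian-ness of the action forces $\varepsilon\langle d_1 J\rangle$ to be exact; since $\langle d_1 J\rangle$ is horizontal, any primitive depends only on $m_1$, and the resulting pullback $\pi_1^*dP$ can be absorbed into a redefinition $J\mapsto J - P\circ\pi_1$ (which leaves $V_J^{(2)}$ unchanged), recovering $\langle d_1 J\rangle = 0$. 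The main subtlety I anticipate lies in this absorption step; the preceding computations are routine given (\ref{Clos1}), (\ref{P1}), and Cartan's formula.
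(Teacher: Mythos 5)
Your derivation of (\ref{SRep1}) follows the paper's route: apply the decomposition (\ref{Clos1}) to the closed form $\sigma$, commute $\mathbf{i}_{\Upsilon}$ past $\mathcal{S}$, insert (\ref{ISS}), and split $d_{2}J=dJ-d_{1}J$ so that the exact piece dies under $d$; you are in fact slightly more careful than the paper in tracking the factor of $\varepsilon$ in $\mathbf{i}_{\Upsilon}\sigma$ and in noting that $\mathcal{S}$ annihilates the invariant form $d(h\circ\pi_{1})$. Where you genuinely diverge is in the last two claims. For non-degeneracy the paper evaluates $\langle\sigma\rangle$ on the explicit adapted frame $\{\hat{v}_{\xi_{i}}+V^{(2)}_{\mathbf{i}_{\hat{v}_{\xi_{i}}}\beta},\,V^{(2)}_{x_{\alpha}}\}$, while you compare top wedge powers, using horizontality of $\beta$ to show that every $d\beta$-contribution to $\langle\sigma\rangle^{n}$ is $O(\varepsilon^{k_{2}+1})$ against the leading $\varepsilon^{k_{2}}$ volume term (with $2k_{2}=\dim M_{2}$); both work, and yours is coordinate-free and self-contained. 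For the momentum map the paper simply specializes Theorem \ref{flperiod} (with $f=h$, $F=J$, $\omega\equiv 1$, $\mu=0$, so that $\theta=\beta$, $\tilde{\sigma}=\langle\sigma\rangle$ and $\tilde{H}_{\varepsilon}=\tilde{J}_{\varepsilon}$), whereas you recompute $\mathbf{i}_{\Upsilon}\langle\sigma\rangle=-d\tilde{J}_{\varepsilon}+\varepsilon\langle d_{1}J\rangle$ directly from Cartan's formula and (\ref{P1}); this identity is correct and has the advantage of making the ``only if'' direction visible. On that direction you are right to flag the subtlety: Hamiltonian-ness only forces $\langle d_{1}J\rangle$ to be exact, hence of the form $\pi_{1}^{\ast}dP$, and one must absorb $P\circ\pi_{1}$ into $J$ --- a shift that changes neither $\Upsilon$, nor $\beta$ (since $\mathcal{S}$ kills the invariant form $\pi_{1}^{\ast}dP$), nor $\langle\sigma\rangle$ --- in order to literally recover (\ref{Adb1}). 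The paper's proof does not address necessity at all, since Theorem \ref{flperiod} gives only sufficiency, so your treatment of that point is, if anything, more complete than the original.
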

\begin{proof}
By formulas (\ref{Clos1}) and (\ref{ISS}), the
$\mathbb{S}^{1}$-average of the symplectic form $\sigma$ is given by
$\langle \sigma\rangle=\sigma-\varepsilon
d\tilde{\beta}$, where%
\begin{align*}
\tilde{\beta} &  =\frac{1}{2\pi}\int_{0}^{2\pi}(t-\pi)(\operatorname{Fl}%
_{\Upsilon}^{t})^{\ast}\mathbf{i}_{\Upsilon}\sigma_{M}dt =-\frac{1}{2\pi}\int_{0}^{2\pi}(t-\pi)(\operatorname{Fl}_{\Upsilon}%
^{t})^{\ast}d_{2}Jdt,\\
&  =-d(\frac{1}{2\pi}\int_{0}^{2\pi}(t-\pi)(\operatorname{Fl}_{\Upsilon}%
^{t})^{\ast}Jdt)+\frac{1}{2\pi}\int_{0}^{2\pi}(t-\pi)(\operatorname{Fl}%
_{\Upsilon}^{t})^{\ast}d_{1}Jdt.
\end{align*}
This implies (\ref{SRep1}). The non-degeneracy of $\langle
\sigma_{M}\rangle$ for small enough $\varepsilon\neq0$, follows from
the evaluating of the 2-form
$\langle \sigma_{M}\rangle$ on the basis of vector fields $\{\hat{v}_{\xi_{i}%
}+V_{\mathbf{i}_{\hat{v}_{\xi_{i}}}\beta}^{(2)},$ $V_{x_{\alpha}%
}^{(2)}\}$, where $\{\xi_{i}\}$ and $\{x_{\alpha}\}$ are coordinates
functions on $M_{1}$ and $M_{2}$, respectively (see, also
\cite{DaVo-08}). Condition (\ref{Adb1}) and formula (\ref{Adb2})
follow directly from Theorem \ref{flperiod}.
\end{proof}

\begin{remark}
If $v_{h}=0$, then we can think of the $\mathbb{S}^{1}$-action as a
family of Hamiltonian actions on $(M_{2},\sigma_{2})$ with
parameterized momentum map $J_{m_{1}}(m_{2})=$ $J(m_{1},m_{2})$. In
this case, hypothesis (\ref{Adb1}), called the adiabatic condition,
was introduced in \cite{MaMoRa-90,Mon-88} in the context of the
Hannay-Berry connections.
\end{remark}

\begin{example}
Consider the phase space $(M=M_{1}\times M_{2},\sigma=\pi_{1}^{\ast}\sigma
_{1}+\varepsilon\pi_{2}^{\ast}\sigma_{2})$, where $M_{2}=\mathbb{S}^{2}
\subset\mathbb{R}^{3}=\{\mathbf{x}=(x^{1},x^{2},x^{3})\}$ is the
unit sphere equipped with standard symplectic form
$\sigma_{2}=\frac{1}{2}\epsilon
_{ijk}x^{i}dx^{j}\wedge dx^{k}$. Given a smooth mapping $\mathbf{\phi}%
:M_{1}\rightarrow\mathbb{S}^{2}$, we define the function $J\in
C^{\infty}(M)$ by
$J(m_{1},\mathbf{x})=\mathbf{\phi}(m_{1})\cdot\mathbf{x}$ for
$m_{1}\in M_{1}$ and $\mathbf{x}\in\mathbb{S}^{2}$. Consider the
$\mathbb{S}^{1}$-action on $M$ with infinitesimal generator
$V_{J\text{ }}^{(2)}=-(\mathbf{\phi \times
x})\cdot\frac{\partial}{\partial x}$, which is given by the
rotations in $\mathbb{R}_{\mathbf{x}}^{3}$ about the axis
$\mathbf{\phi}$. Then, one can show that the
$\mathbb{S}^{1}$-invariant symplectic form $\langle \sigma\rangle$
has the representation (\ref{SRep1}), where
$\beta=(\mathbf{\phi\times x})\cdot d_{1}\mathbf{\phi}$. Moreover,
it easy to see that (\ref{Adb1}) holds, $\langle
d_{1}J\rangle=(\mathbf{\phi\cdot
x})(\mathbf{\phi\cdot}d_{1}\mathbf{\phi})=0$, and hence the
$\mathbb{S}^{1}$-action is Hamiltonian relative to $\langle
\sigma\rangle$ with momentum map $\varepsilon J$. Such a kind of
invariant symplectic structures appears in the study of the particle
dynamics with spin in the context of the averaging method
\cite{Vor-11}.
\end{example}

\textbf{ACKNOWLEDGMENTS.} We thank M.V.Karasev for helpful discussions.This
research was partially supported by CONACYT under the grant no.55463.

\end{document}